\newcommand{\pnm}{\mathrm{PNM}}
\newcommand{\snm}{\mathrm{SNM}}
\newcommand{\nm}[1]{\mathrm{NM}_{#1}}
\begin{document}
\title{Toward Fair and Strategyproof Tournament Rules for Tournaments with Partially Transferable Utilities}
\titlerunning{Tournaments with Partially Transferable Utilities}
%
\author{David Pennock\inst{1}\orcidID{0000-0003-0522-4815} \and
Ariel Schvartzman\inst{2}\orcidID{0000-0003-4016-6235} \and
Eric Xue\inst{3}\orcidID{0009-0001-3977-8173}
}
\authorrunning{D. Pennock et al.}
%
\institute{DIMACS, Rutgers University, New Brunswick NJ 08901, USA \\
\email{dpennock@dimacs.rutgers.edu} \and
Google Research, Mountain View, CA 94043, USA \\
\email{aschvartzman@google.com} \and
Princeton University, Princeton NJ 08544, USA \\
\email{ex3782@princeton.edu}}
\maketitle              
\begin{abstract}
A tournament on $n$ agents is a complete oriented graph with the agents as vertices and edges that describe the win-loss outcomes of the $\binom{n}{2}$ matches played between each pair of agents.
The winner of a tournament is determined by a \emph{tournament rule} that maps tournaments to probability distributions over the agents.
We want these rules to be fair (choose a high-quality agent) and robust to strategic manipulation.
Prior work has shown that under minimally fair rules, manipulations between two agents can be prevented when utility is nontransferable but not when utility is completely transferable. We introduce a partially transferable utility model that interpolates between these two extremes using a selfishness parameter $\lambda$. 
Our model is that an agent may be willing to lose on purpose, sacrificing some of her own chance of winning, but only if the colluding pair's joint gain is more than $\lambda$ times the individual's sacrifice.

\hspace{10pt} 
We show that no fair tournament rule can prevent manipulations when $\lambda < 1$.
We computationally solve for fair and manipulation-resistant tournament rules for $\lambda = 1$ for up to 6 agents.
We conjecture and leave as a major open problem that such a tournament rule exists for all $n$.
We analyze the trade-offs between ``relative'' and ``absolute'' approximate strategyproofness for previously studied  rules and derive as a corollary that all of these rules require $\lambda \geq \Omega(n)$ to be robust to manipulation. 
We show that for stronger notions of fairness, non-manipulable tournament rules are closely related to tournament rules that witness decreasing gains from manipulation as the number of agents increases. 

\keywords{Tournaments  \and Computational Social Choice.}
\end{abstract}
\section{Introduction}

A tournament on $n$ agents is a complete oriented graph in which the agents are vertices and an edge from agent $i$ to agent $j$ means ``agent $i$ defeats agent $j$''.
These structures frequently arise in sports as the outcome of $\binom{n}{2}$ pairwise matches between $n$ agents or teams.
However, tournaments can arise 
whenever 
the performance of every two agents is comparable (e.g., agents are candidates in an election and edges are pairwise majority votes).

A tournament rule maps a tournament to a probability distribution over the agents.
These probabilities encode the likelihood that each agent is declared the tournament winner, or prescribe how to divide up a monetary reward~\cite{felsenthal1992}.
While a tournament rule should be fair in that it chooses some qualified agent who beats many other agents, it also should not reward manipulations: for example, losing a match on purpose should not improve an agent's or their co-conspirator's chances of winning the tournament.
If the rule is manipulable, then agents may act in ways that undermine the primary goal of choosing a highly qualified winner. 
In fact, instances of these actions are not unheard of in sports.
At the London 2012 Olympic Games, four women's doubles teams were disqualified for attempting to throw their final matches in the round-robin group stage in order to earn a more favorable seed in the knockout stage of the tournament.

Unfortunately, prior work has shown that fairness and non-manipulability are largely incompatible.
A prevailing notion of fairness studied by prior work~\cite{altman2010nonmanipulable, altman2009nonmanipulable, ding2021approximately, schneider2016condorcetconsistent, schvartzman2019approximately} is Condorcet consistency.
A tournament rule is Condorcet consistent if, whenever one agent beats all other agents, the undefeated agent wins the tournament with certainty.

Altman, Procaccia, and Tennenholtz~\cite{altman2009nonmanipulable} showed that any deterministic rule that satisfies this notion is susceptible to pairwise manipulations: 
for any Condorcet consistent rule, there exist tournaments in which two agents can influence the choice of winner by colluding to reverse the outcome of their match.

Altman and Kleinberg \cite{altman2010nonmanipulable} extended this work to randomized rules that map tournaments to probability distributions over agents.
They showed that there exist Condorcet consistent and pairwise non-manipulable rules when two agents collude only if one of them can strictly improve her probability of winning at no cost to the other.
Rules that are pairwise non-manipulable under this assumption are said to be 2-Pareto non-manipulable (2-$\pnm$).
However, no Condorcet consistent rule exists when utility is completely transferable---that is, when two agents only care about the probability that at least one of them wins the tournament.
Instead, the authors 
demonstrated rules that are approximately Condorcet consistent and pairwise non-manipulable in this setting, which the authors term 2-strongly non-manipulable (2-$\snm$).
Another line of work~\cite{ ding2021approximately, schneider2016condorcetconsistent, schvartzman2019approximately} sought rules that were fair and approximately 2-$\snm$.

Motivated by the fact that collusion and the deliberate throwing of matches in sports occur less frequently than the negative results of prior work imply, we extend prior work to the setting in which utility is partially transferable.
These settings are natural.
For example, consider a setting in which a tournament rule is used to prescribe a division of monetary reward among the participants.
Because the reward is divisible, if two agents could improve their share of the reward by fixing the outcome of their match, then they may choose to do so and redistribute their winnings later so that the collusion is mutually beneficial. 
But collusions are rarely so frictionless in reality.
There could be uncertainty as to whether the agent that benefits from fixing the outcome of the match will follow through with the redistribution.
There could be penalties for agents found to have thrown their matches.
Or there could be factors beyond the outcome of the tournament that matter, such as an agent's reputation, that a loss would negatively affect.
With these frictions, agents would not be completely altruistic to their partner (fully transferable utility) nor completely selfish (non-transferable utility). 
Instead, agents would care more about winning themselves but may be willing to sacrifice their own probability if it achieves a significant proportional gain for their partner.

We model each agent's values for her own probability of winning \emph{and} for her collusion partner's probability of winning as being in some ratio and extend prior notions of non-manipulability by introducing a term that accounts for the range of selfishness of agents.
More specifically, we say a rule is 2-$\nm{\lambda}$ if no agent can collude with another to improve her probability of winning by at least a $\lambda + 1$ factor of the decrease in probability witnessed by her colluding partner.
Stated another way, under a 2-$\nm{\lambda}$ rule, no pairwise collusions would occur if we assume that each agent would not sacrifice her own chances of winning unless her partner gains at least $\lambda + 1$ times the amount that she loses.

We show that this model connects the notions of Pareto and strong non-manipulability by varying $\lambda$.
Moreover, we conjecture that there exists a tournament rule that is monotone, Condorcet consistent, and 2-$\nm{1}$, implying that it is possible to prevent deliberate loss and collusion, as long as each agent weighs her own probability of winning twice as much as her opponents'.
However, we show that none of the rules proposed in five previous papers~\cite{altman2010nonmanipulable, pagerank-tournament-rule, ding2021approximately, schneider2016condorcetconsistent, schvartzman2019approximately} satisfy this combination of conditions by demonstrating how these rules trade-off between $\lambda$, our notion of relative approximate strategyproofness, and 
the established notion of absolute approximate strategyproofness~\cite{schneider2016condorcetconsistent}.

In a separate direction, we introduce another notion of fairness, termed dominant sub-tournament consistency (DSTC), and show that several natural rules satisfy this condition.
Intuitively, a rule is DSTC if the addition of an agent that loses to the 
original 
agents does not affect their probabilities.
A closely related notion is top cycle consistency (TCC), which requires the winner to come from the top cycle with certainty.
We show that within these notions of fairness, the problem of finding a rule that is 2-$\nm{\lambda}$ reduces to the problem of finding a rule that witnesses gains from manipulation that vanishes as the number of agents increases. 

\subsection{Related Work}

For a broad discussion of recent developments on tournaments in computational social choice, see Suksompong's excellent survey~\cite{WarutSurvey21}.
We discuss work closely related to ours. 

Altman and Kleinberg~\cite{altman2010nonmanipulable} and %
Altman, Procaccia, and Tennenholtz~\cite{altman2009nonmanipulable} %
were the first to consider the question of strategic manipulations of tournaments by agents. Their main conclusion is that Condorcet consistency and strong non-manipulability are directly at odds: no tournament rule, even randomized ones, can satisfy both properties. Later, Schneider, Schvartzman, and Weinberg~\cite{schneider2016condorcetconsistent} considered a relaxation of the problem: they sought tournament rules that are Condorcet consistent and are minimally manipulable. Their main result is that the Randomized Single Bracket Elimination (RSEB) rule is 2-$\snm$-1/3, meaning that the most probability that any pair can gain is 1/3, and this is optimal among all Condorcet-consistent rules. This result was later strengthened to show that the Randomized King of the Hill (RKotH) rule is also 2-$\snm$-1/3 and cover consistent, a notion strictly stronger than Condorcet consistent~\cite{schvartzman2019approximately}. 
Recent discoveries include a rule that is 3-$\snm$-31/60, meaning that the most probability that any coalition of three agents can gain is 31/60, the first explicit rule that is 3-$\snm$-$\alpha$ for $\alpha < 1$~\cite{DinevW22},
and a different rule that is 3-$\snm$-1/2 ~\cite{Miksanik23}.
Parallel lines of work have considered variations on this problem,
including probabilistic tournaments~\cite{ding2021approximately} and tournaments with prize vectors for multiple places rather than only one prize for the winner~\cite{DaleFRSW22}. 


\section{Preliminaries}

\begin{definition}[Tournament]
A tournament $T = (A, \succ_T)$ is a pair where $A$ is a finite set of agents and $\succ_T$ is a complete asymmetric binary relation over $A$ that describes the outcomes of the $\binom{\abs{A}}{2}$ matches played between each pair of distinct agents.
For agents $i \not= j \in A$, we write $i \succ_T j$ if $i$ dominates $j$ in $T$.
Let $\mathcal{T}_n$ denote the set of tournaments where $[n]$ is the set of agents.
\end{definition}

\begin{definition}[Tournament rule]
A tournament rule on $n$ agents $r^{(n)}: \mathcal{T}_n \to \Delta^n$ maps a tournament $T \in \mathcal{T}_n$ to a probability distribution over the agents.
A tournament rule $r$ is a family of tournament rules on $n$ agents $\{r^{(n)}\}_{n=1}^\infty$.
For all $n \in \NN$ and $T \in \mathcal{T}_n$, we write $r(T) := r^{(n)}(T)$, and for $i \in [n]$, we write $r_i(T)$ to denote the probability that $i$ wins $T$ under $r$.
\end{definition}

\subsection{Fairness Properties}

A desirable tournament rule should choose the most qualified agent as the winner of a tournament.
In line with this reasoning, we want a tournament rule to choose an undefeated agent with probability 1 since this agent is clearly better than the rest of her opponents.

\begin{definition}[Condorcet consistency]
A tournament rule on $n$ agents $r^{(n)}$ is Condorcet consistent (CC) if for all $T \in \mathcal{T}_n$, $r_i^{(n)}(T) = 1$ whenever there exists $i \in [n]$ such that $i \succ_T j$ for all $j \in [n] \setminus \{i\}$.
A tournament rule $r$ is CC if $r^{(n)}$ is CC for all $n$.
\end{definition}

Note that Condorcet consistency is quite a minimal notion of fairness since it is binding only when there is an agent that is clearly superior than the others.
Unfortunately, it is often the case that no such agent exists.
The following notions of fairness seek to restrict the subset of agents that should be named the winner in such cases by eliminating those who are in some sense clearly worse than her opponents.

\begin{definition}[Top cycle consistency]
A subset of agents $S$ is the top cycle in tournament $T$ if it is the minimal subset of agents such that $i \succ_T j$ for all $i \in S, j \in [n] \setminus S$.
The top cycle of a tournament always exists and is unique.
Let $TC(T)$ denote the top cycle of $T$.
A tournament rule on $n$ agents $r^{(n)}$ is top cycle consistent (TCC) if for all $T \in \mathcal{T}_n$, $r_i^{(n)}(T) = 0$ for all $i \in [n] \setminus TC(T)$.
A tournament rule $r$ is TCC if $r^{(n)}$ is TCC for all $n$.
\end{definition}

Top cycle consistency extends Condorcet consistency quite naturally: Condorcet consistency requires that an undefeated agent be declared the winner, while top cycle consistency requires this winner to come from the smallest undefeated subset.
Moreover, since the agents in the top cycle are undefeated by those outside of the top cycle, they are in some sense better.
On the other hand, no agent in the top cycle is clearly superior than the others since every agent in the top cycle is defeated by another in the top cycle.

\begin{definition}[Cover consistency]
For $i \not= j$, we say $i$ covers $j$ if $i \succ_T j$ and $j \succ_T k \implies i \succ_T k$ for all $k \in [n] \setminus \{i,j\}$. 
Moreover, we say $j$ is covered if there exists $i \in [n]$ such that $i$ covers $j$.
A tournament rule on $n$ agents $r^{(n)}$ is cover consistent if for all $T \in \mathcal{T}_n$, $r_j^{(n)}(T) = 0$ whenever $j$ is covered.
A tournament rule $r$ is cover consistent if $r^{(n)}$ is cover consistent for all $n$.
\end{definition}

Cover consistency refines top cycle consistency by further restricting the set of potential winners.
If $i$ covers $j$, then not only did $i$ defeat $j$, but $i$ also defeated everyone that $j$ defeated.
Thus, covered agents are worse than the agents that cover them in some sense.

\begin{definition}[Dominant sub-tournament consistency]
For a subset of agents $S \subseteq [n]$ and a tournament $T \in \mathcal{T}_n$, let $T|_S = (S, \{(i,j) \in S \times S : i \succ_T j\})$ denote the subgraph induced by $S$.
$T|_S$ is a dominant sub-tournament in tournament $T$ if $i \succ_T j$ for all $i \in S, j \in [n] \setminus S$.
A tournament rule $r$ is dominant sub-tournament consistent (DSTC) if $r_i(T|_S) = r_i(T)$ for all $i \in S$. 
\end{definition}

Dominant sub-tournament consistency strengthens top cycle consistency in a different direction than cover consistency.
Rather than narrow down the set of potential winners, dominant sub-tournament consistency requires that the probability of choosing a certain member of the top cycle as the winner is the same as the probability of choosing her if the agents outside the top cycle were removed.
To the best of our knowledge, dominant sub-tournament consistency has not been considered before in the tournament literature.

The following result formalizes the hierarchy of fairness conditions.

\begin{proposition}[Fairness hierarchy]
Any tournament rule that satisfies either cover consistency or DSTC satisfies TCC.
Moreover, any TCC rule is CC.
\end{proposition}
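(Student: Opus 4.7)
The plan is to verify three implications separately: cover consistency $\Rightarrow$ TCC, DSTC $\Rightarrow$ TCC, and TCC $\Rightarrow$ CC. Each should follow directly from unpacking the definitions, with the only real content being the observation that the top cycle ``absorbs everything below it.''

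For TCC $\Rightarrow$ CC, I would argue that whenever $i$ beats every other agent, the singleton $\{i\}$ is a dominating set (vacuously $i \succ_T j$ for $j \notin \{i\}$), so by minimality $TC(T) = \{i\}$. TCC then forces $r_j^{(n)}(T) = 0$ for $j \neq i$, and since $r^{(n)}(T)$ is a distribution, $r_i^{(n)}(T) = 1$.

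For cover consistency $\Rightarrow$ TCC, I would show that every agent $j \notin TC(T)$ is covered. Pick any $i \in TC(T)$; by the definition of the top cycle, $i \succ_T j$. For any $k$ with $j \succ_T k$, I need $i \succ_T k$. Observe $k \notin TC(T)$, because every agent in $TC(T)$ beats every agent outside, so if $k$ were in the top cycle then $k \succ_T j$, contradicting $j \succ_T k$. But then $i \in TC(T)$ and $k \notin TC(T)$ give $i \succ_T k$. Hence $i$ covers $j$, so $j$ is covered and cover consistency yields $r_j^{(n)}(T)=0$.

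For DSTC $\Rightarrow$ TCC, let $S = TC(T)$. By definition of the top cycle, $i \succ_T j$ for all $i \in S, j \in [n] \setminus S$, so $T|_S$ is a dominant sub-tournament. DSTC gives $r_i(T|_S) = r_i(T)$ for every $i \in S$. Since $r(T|_S)$ is a probability distribution over $S$, summing yields $\sum_{i \in S} r_i(T) = 1$, forcing $r_j(T) = 0$ for every $j \in [n] \setminus S$, which is TCC. The main obstacle, such as it is, is simply pinning down the right witness in the cover-consistency step; everything else is definitional chasing, and no counting or probabilistic arguments are needed.
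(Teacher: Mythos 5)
Your proposal is correct and follows essentially the same three-step argument as the paper's own proof: covered status of agents outside the top cycle, summing probabilities over the dominant sub-tournament $T|_{TC(T)}$, and the singleton top cycle for a Condorcet winner. The only (harmless) difference is that you spell out explicitly why $k \notin TC(T)$ in the cover-consistency step, which the paper states more tersely.
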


\begin{proof}
Let $T \in \mathcal{T}_n$.
Suppose $r$ is a cover consistent tournament rule, and consider any $j \not\in TC(T)$.
Observe that any $i \in TC(T)$ covers $j$ since by definition of the top cycle, we have that $i \succ_T j$, and for any $k \in [n]$ such that $j \succ_T k$, we have that $k \not\in TC(T)$, so $i \succ_T k$.
Since $r$ is cover consistent and $j$ is covered, we have that $r_j(T) = 0$.
Thus, $r$ is TCC.

Now, suppose $r$ is a DSTC tournament rule.
By definition, $TC(T)$ is a dominant sub-tournament of $T$.
Thus, $\sum_{i \in TC(T)} r_i(T) = \sum_{i \in TC(T)} r_i(T|_{TC(T)}) = 1$.
It follows  that $r_i(T) = 0$ for all $i \in [n] \setminus TC(T)$, so $r$ is TCC.

Now, suppose $r$ satisfies TCC, and note that whenever some agent $i$ is undefeated in $T$, $i$ is the only member of $TC(T)$: $i$ dominates every $j \not= i$, and no proper subset of $\{i\}$ satisfies this property.
Thus, $r_i(T) = 1$ and $r$ is CC.
\qed
\end{proof}

\subsection{Non-manipulability Properties}

In addition to satisfying some notion of fairness, tournament rules should be robust to manipulation.
In this work, we consider manipulations where a single agent purposefully loses her match against one of her opponents and manipulations where two agents collude to reverse the outcome of their match.

\begin{definition}[$S$-adjacent]
$T, T' \in \mathcal{T}_n$ are $S$-adjacent where $S \subseteq [n]$ if $i \succ_T j \iff i \succ_{T'} j$ for $i \not= j \in [n] \setminus S$.
In other words, $T$ and $T'$ are $S$-adjacent if they coincide on every match except possibly those between agents in $S$.
\end{definition}

When utilities are nontransferable, two agents are willing to collude only if one of them can strictly improve her probability of winning at no cost to the other.
Formally, distinct agents $i, j \in [n]$ collude from tournament $T$ to tournament $T'$ only if $\max\{r_i(T') - r_i(T), r_j(T') - r_j(T)\} > 0$ and $\min\{r_i(T') - r_i(T), r_j(T') - r_j(T)\} \geq 0$.
Thus, to incentivize agents against such manipulations, a tournament rule must satisfy the following notion of non-manipulability.

\begin{definition}[2-Pareto non-manipulability]
A tournament rule $r$ is 2-Pareto non-manipulable (2-PNM) if for all $i \not= j \in [n]$ and $\{i, j\}$-adjacent tournaments $T \not= T' \in \mathcal{T}_n$, either (1) $\min\{r_i(T') - r_i(T), r_j(T') - r_j(T)\} < 0$ or (2) $\max\{r_i(T') - r_i(T), r_j(T') - r_j(T)\} \leq 0$.
\end{definition}

Altman and Kleinberg~\cite{altman2010nonmanipulable} give a rule that is monotone, TCC, and 2-PNM.
The barrier to pairwise manipulation is much lower when utilities are completely transferable since two agents only care about the probability that at least one of them wins the tournament.
In other words, $i$ and $j$ collude from $T$ to $T'$ only if $r_i(T') + r_j(T') > r_i(T) + r_j(T)$.
Under this utility model, an agent may be willing to sacrifice and shift a significant portion of her probability to her partner in crime.
Thus, tournament rules must satisfy a stronger notion of non-manipulability in this setting.

\begin{definition}[2-strong non-manipulability]
A tournament rule $r$ is 2-strongly non-manipulable (2-SNM) if $r_i(T') + r_j(T') \leq r_i(T) + r_j(T)$ for all $i \not= j \in [n]$ and $\{i, j\}$-adjacent tournaments $T \not= T' \in \mathcal{T}_n$.
\end{definition}

Prior work has shown that no Condorcet consistent tournament rule is 2-SNM.
However, despite this strong impossibility result, instances of collusion are relatively infrequent in the real world, suggesting that settings in which utilities are completely transferable are uncommon.
On the other hand, instances of collusion are not unheard of, suggesting that utility is neither always nontransferable.

In this paper, we consider a third utility model in which utilities are partially transferable: distinct agents $i$ and $j$ collude from tournament $T$ to tournament $T'$ only if $r_i(T') + r_j(T') > r_i(T) + r_j(T) + \lambda \max\{r_i(T) - r_i(T'), r_j(T) - r_j(T')\}$.
In this model, two agents always collude if both of them improve their chances of winning and never collude if both of their chances decrease.
The interesting case is when one agent improves her chances at the expense of the other.
One interpretation of this necessary condition is that agents would rather win the tournament themselves but are still willing to collude if the gain in probability is significantly larger than each agent's loss.
Here, $\lambda$ is a parameter that measures how transferable utility is.
Note that when $\lambda$ is low, utilities are more transferable.
We will later see how $\lambda$ can be interpreted as agents' level of selfishness. 
We now define a notion of non-manipulability for this model.


\begin{definition}[2-non-manipulability for $\lambda$]
A tournament rule $r$ is 2-non-manipulable for $\lambda \geq 0$ (2-$\nm{\lambda}$) if $r_i(T') + r_j(T') \leq r_i(T) + r_j(T) + \lambda \max\{r_i(T) - r_i(T'), r_j(T) - r_j(T')\}$ for all $i \not= j \in [n]$ and $\{i, j\}$-adjacent tournaments $T \not= T' \in \mathcal{T}_n$.
We say $r$ is 2-$\nm{\infty}$ if $r_i(T') + r_j(T') \leq r_i(T) + r_j(T) + \lim_{\lambda \to \infty} \lambda \max\{r_i(T) - r_i(T'), r_j(T) - r_j(T')\}$ (where the limit is taken in the extended reals).
\end{definition}

Observe that when $\lambda = 0$, our notion of non-manipulability coincides with strong non-manipulability.
Moreover, we show that our notion coincides with Pareto non-manipulability when $\lambda = +\infty$.
We remark that we do not interpret 2-$\nm{\lambda}$ as an approximation to 2-$\snm$.
Unlike approximation algorithms, a tournament designer who finds herself faced with e.g., agents who value their opponents chances of winning as much as their own (completely transferable utility) may not find it in her best interest to use a 2-$\nm{\lambda}$ tournament rule for some $\lambda > 1$.
Rather, $\lambda$ is meant to model the behavior of the agents.

\begin{proposition}\label{prop:pnm-iff-nm-infty}
A tournament rule is 2-$\pnm$ if and only if it is 2-$\nm{\infty}$.
\end{proposition}

By Proposition \ref{prop:pnm-iff-nm-infty}, our notion of non-manipulability generalizes strong and Pareto non-manipulability while connecting the two.
As in previous work~\cite{schneider2016condorcetconsistent, schvartzman2019approximately}, we are interested in approximately non-manipulable tournament rules; that is, rules under which no two agents can collude to gain in joint probability more than $\alpha$ more than each agent's loss (weighted by $\lambda$).
We will see later that there is a range of $\lambda$ for which fair and non-manipulable tournament rules do not exist.
For $\lambda$ in this range, it may be better to design approximately non-manipulable tournament rules tailored to $\lambda$ than use a 2-$\nm{\lambda'}$ tournament rule for some $\lambda' > \lambda$.

\begin{definition}[2-non-manipulability up to $\alpha$ for $\lambda$]
A tournament rule $r$ is 2-non-manipulable up to $\alpha$ for $\lambda \geq 0$ (2-$\nm{\lambda}$-$\alpha$) if $r_i(T') + r_j(T') \leq r_i(T) + r_j(T) + \lambda \max\{r_i(T) - r_i(T'), r_j(T) - r_j(T')\} + \alpha$ for all $i \not= j \in [n]$ and $\{i, j\}$-adjacent tournaments $T \not= T' \in \mathcal{T}_n$.
\end{definition}


In addition to being robust against pairwise manipulations, a tournament rule should be robust to the intentional throwing of matches.

\begin{definition}[Monotonicity]
A tournament rule is monotone if $r_i(T) \geq r_i(T')$ for all $i \not= j \in [n]$ and $\{i,j\}$-adjacent tournaments $T \not= T' \in \mathcal{T}_n$ such that $i \succ_T j$.
\end{definition}

Intuitively, monotonicity says that no agent should be able to improve her chances of winning by deliberately losing one of her matches.
Thus, agents have an incentive to win each of their matches under monotone rules.
Violations of this property should be seen as quite severe.

\begin{proposition}\label{prop:mono-iff-one-sided-nm}
Let $r$ be a 2-$\nm{\lambda}$ tournament rule for some $\lambda > 0$, then the following two statements are equivalent.
\begin{enumerate}
    \setlength{\itemsep}{0pt}
    \item $r$ is monotone
    \item For all $i \not= j \in [n]$ and $\{i, j\}$-adjacent tournaments $T \not= T' \in \mathcal{T}_n$ such that $i \prec_T j$, $r_i(T') - r_i(T) \leq (\lambda + 1) (r_j(T) - r_j(T'))$
\end{enumerate}
\end{proposition}

Proposition \ref{prop:mono-iff-one-sided-nm} offers a natural interpretation of the parameter $\lambda$ and the 2-$\nm{\lambda}$ property for monotone tournament rules: $\lambda$ is how much each agent weighs her own probability of winning over others' probabilities of winning and a tournament rule is 2-$\nm{\lambda}$ if switching the outcome of a match does not increase the probability of winning for the new winner by more than a $\lambda + 1$ factor over the loss of the new loser.
Note that Proposition \ref{prop:mono-iff-one-sided-nm} does not hold for $\lambda = 0$.
Indeed, monotonicity and 2-$\snm$ are independent properties: neither implies the other.

\subsection{Tournament Rules}

In this section, we define several tournament rules.
See Table~\ref{tab:summary-prior-results} for a summary of what was known about them prior to this work (to the best of our knowledge).

\begin{table}
    \caption{Summary of relevant prior results. Unless stated otherwise, results come from the paper that proposed the tournament rule.}
    \centering
    \begin{tabular}{ c || c | c | c | c}
        Rule & Monotone? & Fairness & 2-$\pnm$? & 2-$\snm$-$\alpha$  \\
        \hline
        ICR~\cite{altman2010nonmanipulable} & Yes & TCC & Yes & $\alpha \geq \frac{1}{2} - \frac{1}{n(n-1)}$~\cite{schneider2016condorcetconsistent} \\
        RVC~\cite{altman2010nonmanipulable} &  Yes & TCC & Yes & $\alpha \geq \frac{1}{2} - \frac{n-3}{n(n-1)}$~\cite{schneider2016condorcetconsistent} \\
        TCR~\cite{altman2010nonmanipulable} & Yes & TCC & Yes & $\alpha \geq 1-2/n$~\cite{schneider2016condorcetconsistent} \\
        RSEB~\cite{schneider2016condorcetconsistent} & Yes & CC & ? & $\alpha = 1/3$ \\
        RKotH~\cite{schvartzman2019approximately} & Yes & cover & ? & $\alpha = 1/3$ \\
        RDM~\cite{ding2021approximately} & ? & CC & ? & $\alpha = 1/3$ \\
        PR~\cite{pagerank-tournament-rule} & ? & ? & ? & ? \\
        PRSL~\cite{pagerank-tournament-rule} & ? & ? & ? & ?
    \end{tabular}
    \label{tab:summary-prior-results}
\end{table}

\begin{enumerate}
\item The \textbf{Iterative Condorcet Rule} (ICR) chooses the undefeated agent if one exists.
Otherwise, eliminate an agent uniformly at random and repeat.
\item The \textbf{Randomized Voting Caterpillar} rule (RVC) begins by choosing a permutation of the agents uniformly at random.
In the first iteration, RVC eliminates the loser between the first and second agents in the permutation.
In each subsequent iteration until only one agent remains, RVC eliminates the loser between the previous winner and the next agent in the permuation.
\item The \textbf{Top Cycle Rule} (TCR) chooses an agent uniformly at random from the top cycle and declares her the winner.
\item A single elimination bracket is a complete binary tree whose leaves are labeled by a permutation of the agents.
Each node is labeled by the winner of the match between its two children.
The winner of the bracket is the agent labeling the root node.
The \textbf{Randomized Single Elimination Bracket} rule (RSEB) introduces $2^{\ceil{\log n}} - n$ dummy agents who lose to the existing agents, chooses a bracket uniformly at random, and declares the winner of this bracket the winner of the tournament.
\item The \textbf{Randomized King of the Hill} rule (RKotH) chooses the undefeated agent if one exists.
Otherwise, choose an agent uniformly at random, eliminate her and the agents she dominates, and repeat.
\item The \textbf{Randomized Death Match} rule (RDM) chooses a pair of agents uniformly at random, eliminates the loser, and repeats.
\item The \textbf{PageRank} (PR) (\textbf{with Self-Loops} (PRSL)) rule chooses agent $i$ as the winner with probability
\[
    r_i(T) = \begin{cases}\mathrm{pr}_i\Paren{T|_{TC(T)}} & i \in TC(T) \\ 0 & i \not\in TC(T)\end{cases}
\]
where for a strongly connected (sub)tournament $S$, $\mathrm{pr}(S)$ is the unique solution to the following linear system of equations: 
\begin{align*}
    \forall\,i,\:\mathrm{pr}_i(S) &= \sum_{j : i \succ_{S} j} \frac{1}{\abs{\{k : j \prec_S k\}} + \1(PRSL)} \mathrm{pr}_j(S) \\
    \textstyle \sum_i \mathrm{pr}_i(S) &= 1
\end{align*}
Note that both PR and PRSL are well-defined since the top cycle is strongly connected, so the stationary distribution is indeed unique.
PageRank's recursive definition is natural for tournaments: an agent has high PageRank if she beats many other agents with high PageRank.
\end{enumerate}

We show that many previously studied tournament rules actually satisfy stronger notions of fairness than previously  demonstrated.
We particularly highlight that many of them satisfy our proposed notion of DSTC.

\begin{theorem}
RSEB satisfies TCC but neither DSTC nor cover consistency.
ICR, RVC, TCR, RDM, PR and PRSL satisfy DSTC but not cover consistency.
RKotH satisfies both DSTC and cover consistency.
ICR, RVC, TCR, RSEB, RKotH, and RDM are monotone.
\end{theorem}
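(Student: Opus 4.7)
The theorem aggregates many claims across six rules and several properties. My plan is to group statements by property and reuse arguments across rules. Monotonicity of ICR, RVC, TCR, RSEB, RKotH, RDM is largely folklore: each admits a coupling of the shared randomness (bracket, random permutation, or sequence of uniform draws) that shows flipping the $ij$-edge in $i$'s favor only weakly helps $i$. Cover consistency of RKotH is proved in~\cite{schvartzman2019approximately}, which I would cite. TCC of RSEB follows from a short induction on bracket depth: any sub-bracket containing at least one $TC(T)$-agent has a $TC(T)$-winner, since non-$TC$ agents and dummies lose to every $TC(T)$-agent, and the overall bracket winner therefore lies in $TC(T)$.

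\textbf{DSTC for the seven positive rules.} Fix $T$ and a dominant sub-tournament $S$. The key structural facts I use are $TC(T) \subseteq S$ with $TC(T) = TC(T|_S)$, and that eliminating a non-$S$ agent leaves $S$ dominant in the reduced tournament. For TCR, PR, and PRSL the claim is essentially by definition, since each depends only on $T|_{TC(T)}$. For the iterative rules ICR, RVC, RKotH, RDM I proceed by induction on $|T|$ using the template that the uniform random per-step choice, conditioned on being ``within $S$,'' is uniform over the corresponding $S$-objects (agents for ICR/RKotH, positions for RVC, pairs for RDM) and matches the choice the rule would make on $T|_S$; outside-$S$ choices never create a new Condorcet winner and do not alter the within-$S$ structure. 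The ICR case reduces to a short algebraic identity: the recursion splits the sum over first-eliminated agents into $S$ and non-$S$ parts, and the non-$S$ part simply re-introduces the same sub-problem on a smaller $T$. RVC has a particularly clean form: a uniform permutation of $[n]$ has a uniform permutation of $S$ as its $S$-marginal, and non-$S$ agents in the caterpillar are eliminated on contact with the current winner the moment it enters $S$.

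\textbf{Negative claims.} Each negative statement is handled by an explicit small counterexample. For RSEB failing DSTC, take $S$ to be the four-agent tournament with edges $a \succ b, b \succ c, c \succ d, d \succ a, a \succ c, b \succ d$ and let $T = S \cup \{e\}$ with $e$ dominated by all of $S$: in the $4$-bracket on $T|_S$ the agent $d$ has probability $0$ (its only in-$T|_S$ win is over $a$), but in the $8$-bracket on $T$ there is positive probability that $d$'s half consists of $d$ together with three dummies, in which case $d$ wins its half and then beats $a$ (the Condorcet winner of the opposite half $\{a,b,c,e\}$) in the final. A related $5$-agent construction exhibits a covered agent who wins an $8$-bracket through a path of dummies and the dominated real agent, witnessing that RSEB is not cover consistent. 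For each of the remaining rules, the same $4$-agent tournament with top cycle $[4]$ and covered agent $c$ (covered by $b$) gives a counterexample to cover consistency: TCR places mass $1/4$ on $c$; PR gives $c$ weight $2/13$ and PRSL gives $c$ weight $3/23$ by direct solution of the linear system; and ICR, RVC, RDM each give $c$ positive probability ($1/12, 1/12, 1/18$ respectively) by short case analyses.

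\textbf{Main obstacle.} The main technical challenge is making the inductive DSTC arguments for RKotH and RDM airtight: one must track that after any outside-$S$ step the set $S$ remains a dominant sub-tournament of the reduced tournament, that no spurious Condorcet winner has appeared (relevant for ICR), and that the conditional distribution of the next within-$S$ event matches the one used by the rule on $T|_S$. Each piece follows from the dominance of $S$ over non-$S$ and the uniformity of the per-step random draw, but the couplings must be composed carefully across many alternating within-$S$ and outside-$S$ steps.
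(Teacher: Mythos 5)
Your proposal is correct and follows essentially the same route as the paper: citing prior work for monotonicity and RKotH's cover consistency, the same one-step argument for RSEB's TCC, the same ``insert a Condorcet loser / couple the shared randomness'' idea for DSTC of the iterative rules (yours is spelled out as a general induction over a dominant sub-tournament $S$, which is the careful version of the paper's one-liner), and the same four-agent cyclic tournament (plus a dummy-padded variant for RSEB) as the counterexample for every negative claim, with your explicit probabilities $1/12$, $1/12$, $1/18$, $1/4$, $2/13$, $3/23$ all checking out. No gaps.
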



\begin{proof}
Schvartzman et al.~\cite{schvartzman2019approximately} showed that RKotH satisfies cover consistency.
The authors of~\cite{altman2010nonmanipulable, schneider2016condorcetconsistent, schvartzman2019approximately} proved the monotonicity of ICR, RVC, TCR, RSEB, and RKotH.
RDM is monotone since for any deterministic sequence of matches, an agent gets at least as far as she did in the original tournament if she wins an additional match.

RSEB satisfies TCC because in order for an agent outside of the top cycle to win, she must eventually defeat an agent in the top cycle.
To see how RSEB violates DSTC and cover consistency, consider the 8-agent tournament $T$ in which 
\begin{align*}
    1 \succ_T 2, 2 \succ_T 3, 3 \succ_T 4, 4 \succ_T 1, 1 \succ_T 3, 2 \succ_T 4 
\end{align*}
and $i \succ_T j$ for all $i \in \{1, 2, 3, 4\}, j \in \{5, 6, 7, 8\}$.
The relations between $\{5,6,7,8\}$ can be arbitrary.
Note that 2 covers 3 in $T$ yet 3 can win e.g., the bracket whose leaves are labeled by the permutation $(1, 2, 4, 5, 3, 6, 7, 8)$.
Moreover, the sub-tournament induced by the first four agents is a dominant sub-tournament.
Observe that 4 never wins a bracket in $T|_{[4]}$, but in $T$, 4 can win e.g., the bracket whose leaves are labeled by the permutation $(1, 2, 3, 5, 4, 6, 7, 8)$.

\begin{figure}
    \centering
    \includegraphics[width=\linewidth]{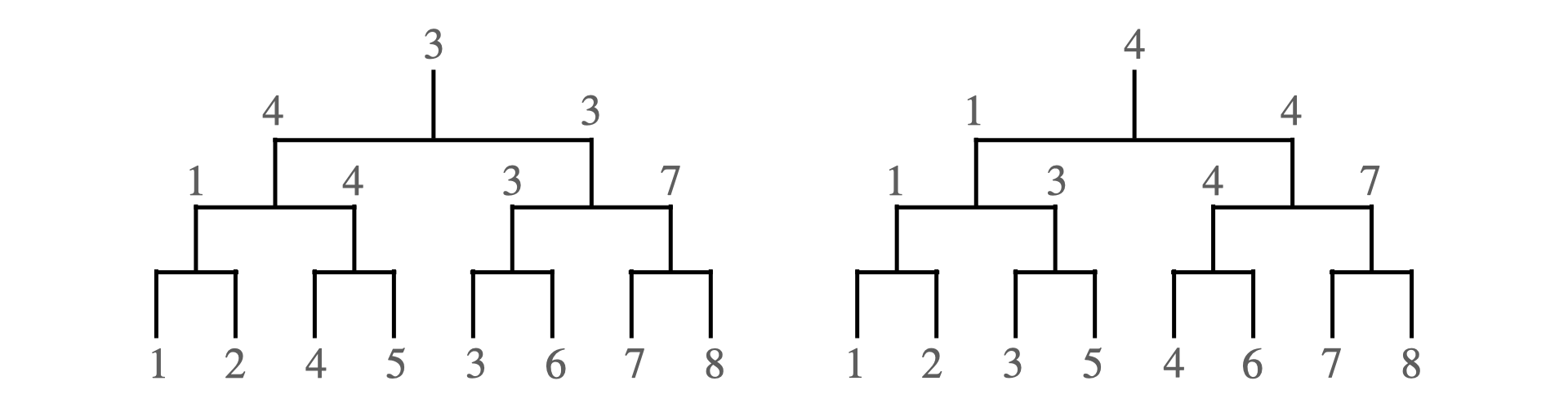}
    \caption{3 is covered yet wins in the left bracket. 4 never wins a bracket in $T|_{[4]}$ yet wins in the right bracket.
    }
\end{figure}

TCR satisfies DSTC because the addition of an agent that loses to all existing agents (a Condorcet loser) does not change the top cycle.
ICR, RVC, and RKotH satisfy DSTC because inserting a Condorcet loser into a permutation does not change the winner, so each agent wins the same proportion of permutations they did before.
Similarly, RDM satisfies DSTC because inserting a match involving a Condorcet loser into a sequence of matches does not change the winner.
PR and PRSL are DSTC by definition.

To see why ICR, RVC, TCR, RDM, PR, and PRSL fail to satisfy cover consistency, it suffices to consider $T|_{[4]}$.
3 is in the top cycle, so she can win with positive probability under TCR, PR, and PRSL.
3 wins ICR and RVC if the chosen permutation is $(2, 1, 4, 3)$.
3 wins RDM if $(1,2)$ is the first pair chosen, $(1,4)$ is the second, and $(3,4)$ is the last.
\qed
\end{proof}

\section{Lower Bounds}

Schneider et al.~\cite{schneider2016condorcetconsistent} showed that no Condorcet consistent tournament rule is 2-$\nm{0}$-$\alpha$ for $\alpha < 1/3$.
The same lower bound construction yields Theorem \ref{thm:lower-bound-all-rules}.

\begin{theorem}\label{thm:lower-bound-all-rules}
No CC tournament rule is 2-$\nm{\lambda}$-$\alpha$ for $\lambda < 1 - 3\alpha$.
\end{theorem}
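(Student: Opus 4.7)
The plan is to adapt the three-agent cyclic construction of Schneider, Schvartzman, and Weinberg~\cite{schneider2016condorcetconsistent}, with the only modification being that we carry the new ``$\lambda \max\{\cdot\}$'' term through the calculation. Suppose for contradiction that $r$ is CC and 2-$\nm{\lambda}$-$\alpha$ with $\lambda < 1 - 3\alpha$. Apply $r^{(3)}$ to the cyclic tournament $T \in \mathcal{T}_3$ defined by $1 \succ_T 2$, $2 \succ_T 3$, $3 \succ_T 1$, and write $p_i := r_i(T)$, so that $p_1 + p_2 + p_3 = 1$. For each pair $\{i,j\}$, let $T_{ij}$ denote the $\{i,j\}$-adjacent tournament obtained from $T$ by reversing the edge between $i$ and $j$. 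A direct check shows $T_{12}$ has $2$ as Condorcet winner, $T_{23}$ has $3$, and $T_{13}$ has $1$; by Condorcet consistency, the unique undefeated agent gets probability $1$ in each.

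Next I would apply the 2-$\nm{\lambda}$-$\alpha$ inequality to the transition $T \to T_{ij}$ for the pair $\{i,j\}$. For $\{1,2\}$: we have $r_1(T_{12}) + r_2(T_{12}) = 1$, while $r_2(T) - r_2(T_{12}) = p_2 - 1 \leq 0 \leq p_1 = r_1(T) - r_1(T_{12})$, so the max slot is picked up by the agent throwing the match. The 2-$\nm{\lambda}$-$\alpha$ condition then gives
\[
    1 \;\leq\; p_1 + p_2 + \lambda p_1 + \alpha,
\]
i.e.\ $(1+\lambda) p_1 + p_2 \geq 1 - \alpha$. By the same argument applied cyclically to the other two pairs, I obtain $(1+\lambda) p_2 + p_3 \geq 1 - \alpha$ and $(1+\lambda) p_3 + p_1 \geq 1 - \alpha$.

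Finally I would sum the three inequalities. The left-hand side is $(2+\lambda)(p_1 + p_2 + p_3) = 2 + \lambda$, while the right-hand side is $3(1-\alpha)$, giving $\lambda \geq 1 - 3\alpha$, which contradicts the hypothesis. The only step that needs a sentence of justification is the identification of which coordinate realizes the max in the 2-$\nm{\lambda}$-$\alpha$ bound, and this is immediate because the newly crowned Condorcet winner's probability jumps to $1$ and so her ``loss'' is non-positive. I do not foresee a real obstacle: the $\lambda$ term threads through the cyclic sum in exactly the same additive way that the slack $\alpha$ does, which is what makes the bound $\lambda < 1 - 3\alpha$ a single, clean trade-off between relative and absolute approximate strategyproofness.
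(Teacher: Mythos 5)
Your proof is correct and follows essentially the same route as the paper's: the $3$-cycle, the three 2-$\nm{\lambda}$-$\alpha$ inequalities in which the max term is realized by the demoted agent's original probability, and the cyclic sum yielding $2+\lambda \geq 3(1-\alpha)$. If anything, your version is marginally cleaner, since pinning down the post-manipulation distribution exactly via Condorcet consistency identifies the max term directly and so avoids the restriction to monotone rules that the paper's write-up imposes.
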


\begin{proof}
We prove the theorem for monotone rules, but with some additional casework, one can extend the result to non-monotone rules.

Suppose tournament rule $r$ is CC and 2-$\nm{\lambda}$-$\alpha$, and consider any tournament $T$ on $[n]$ in which $1$ dominates $2$, $2$ dominates $3$, and $3$ in turn dominates $1$.
Note that any two agents among $\{1,2,3\}$ can collude so that one of them becomes undefeated.
Since $r$ is monotone, CC, and 2-$\nm{\lambda}$-$\alpha$,
\begin{align*}
    1 - r_2(T) &\leq (\lambda + 1) r_1(T) + \alpha \\
    1 - r_3(T) &\leq (\lambda + 1) r_2(T) + \alpha \\
    1 - r_1(T) &\leq (\lambda + 1) r_3(T) + \alpha 
\end{align*}
Adding these three inequalities together and isolating $\lambda$ yields
\[
    \textstyle \frac{3(1 - \alpha)}{r_1(T) + r_2(T) + r_3(T)} - 2 \leq \lambda
\]
Since $r_1(T) + r_2(T) + r_3(T) \leq 1$, this inequality implies that $\lambda \geq 1 - 3\alpha$.
\qed
\end{proof}

\begin{corollary}
No Condorcet consistent tournament rule is 2-$\nm{\lambda}$ for $\lambda < 1$.
\end{corollary}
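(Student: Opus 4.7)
The plan is to derive the corollary immediately from Theorem \ref{thm:lower-bound-all-rules} by specializing to the case $\alpha = 0$. First I would observe that a tournament rule being 2-$\nm{\lambda}$ coincides exactly with it being 2-$\nm{\lambda}$-$0$: comparing the definitions, the only difference is the additive slack term $\alpha$ appearing in the right-hand side of the manipulation inequality, and setting $\alpha = 0$ recovers the original 2-$\nm{\lambda}$ condition. So the claim ``$r$ is 2-$\nm{\lambda}$'' is literally the instance $\alpha = 0$ of ``$r$ is 2-$\nm{\lambda}$-$\alpha$''.

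Then I would apply Theorem \ref{thm:lower-bound-all-rules} with $\alpha = 0$: the theorem rules out CC rules that are 2-$\nm{\lambda}$-$\alpha$ whenever $\lambda < 1 - 3\alpha$, which at $\alpha = 0$ becomes $\lambda < 1$. Combining with the definitional equivalence above, no CC rule is 2-$\nm{\lambda}$ for any $\lambda < 1$, as claimed. There is no real obstacle here — the corollary is a one-line specialization. The only subtlety worth flagging in the write-up is the notational check that ``2-$\nm{\lambda}$'' and ``2-$\nm{\lambda}$-$0$'' name the same property, so that we are entitled to invoke the theorem's additive bound at $\alpha = 0$.
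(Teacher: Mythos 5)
Your proposal is correct and matches the paper's (implicit) derivation exactly: the corollary is stated as an immediate consequence of Theorem~\ref{thm:lower-bound-all-rules}, obtained by noting that 2-$\nm{\lambda}$ is the $\alpha = 0$ instance of 2-$\nm{\lambda}$-$\alpha$ and that the theorem's condition $\lambda < 1 - 3\alpha$ specializes to $\lambda < 1$. Nothing further is needed.
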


We believe this lower bound is tight.
That is, we believe that there exists a monotone and Condorcet consistent tournament rule that is 2-$\nm{1}$.
Thus, pairwise collusion can be prevented without sacrificing fairness as long as agents prefer not to collude if their sacrifice in probability is greater than the joint gain. Figure~\ref{fig:non-isomorphic-4} shows such an $r$ for 4 agents.
Expressing and computationally solving the problem as a feasibility linear program show that such rules exist for tournaments of up to 6 agents. 
Unfortunately, as the number of tournaments on $n$ agents grows exponentially with $n$, it became computationally difficult to check whether such rules exist for tournaments of larger size.

\begin{figure}
    \centering
    \includegraphics[width=\linewidth]{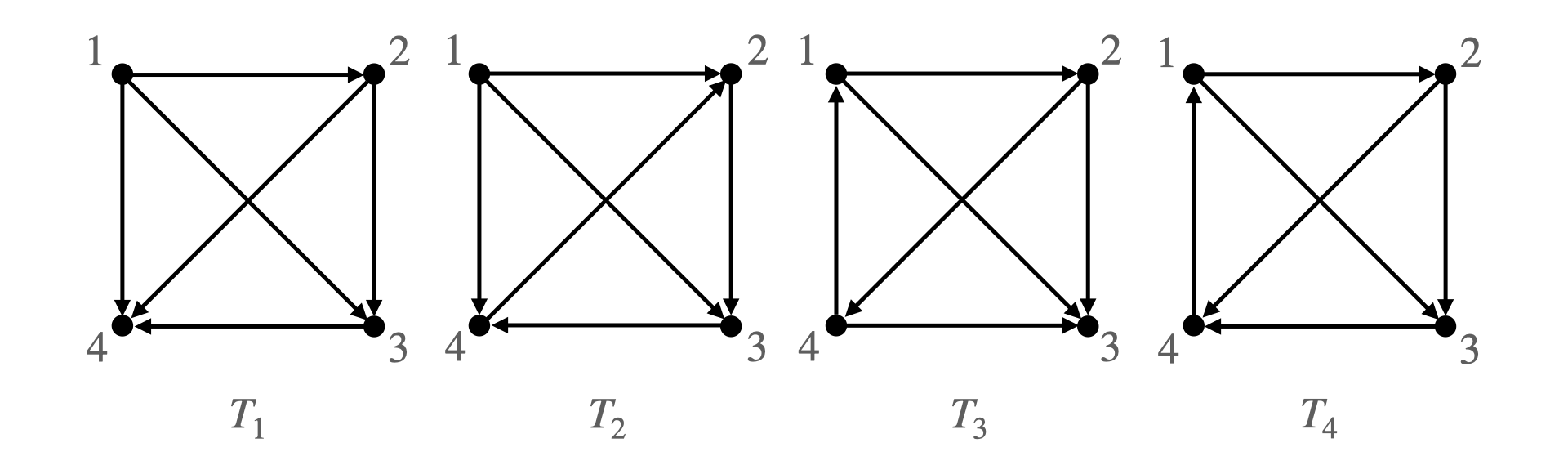}
    \caption{All non-isomorphic tournaments on 4 agents. 
        The following conditions are necessary and sufficient for a tournament rule on 4 agents $r$ to be Condorcet-consistent and 2-$\nm{1}$.
        In $T_1$ and $T_2$, $r$ chooses 1 as the winner with probability 1. 
        In $T_3$, $r$ chooses the winner uniformly at random among 1, 2, and 4.
        In $T_4$, $r$ chooses the winner according to a distribution that is a convex combination of $\left(\frac{4}{9}, \frac{2}{9}, 0, \frac{3}{9} \right)$, $\left(\frac{5}{9}, \frac{1}{9}, 0, \frac{3}{9} \right)$, $\left(\frac{13}{33}, \frac{8}{33}, \frac{2}{33}, \frac{10}{33} \right)$, $\left(\frac{5}{12}, \frac{13}{48}, \frac{1}{48}, \frac{7}{24} \right)$, $\left(\frac{11}{21}, \frac{4}{21}, \frac{1}{21}, \frac{5}{21} \right)$, and $\left(\frac{17}{39}, \frac{7}{39}, \frac{4}{39}, \frac{11}{39} \right)$. 
    }
    \label{fig:non-isomorphic-4}
\end{figure}

\begin{conjecture}
There exists a tournament rule that is monotone, Condorcet consistent, and 2-$\nm{1}$.
\end{conjecture}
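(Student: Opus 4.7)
The plan is to attempt a constructive proof by defining a candidate rule recursively and then verifying the three properties by induction on $n$. The starting observation is that the 3-cycle analysis inside the proof of Theorem~\ref{thm:lower-bound-all-rules} is essentially tight at $\lambda = 1, \alpha = 0$: summing the three inequalities forces $r_1(T)+r_2(T)+r_3(T) = 1$ and then equality in each line gives $r_1 = r_2 = r_3 = 1/3$ on any dominant $3$-cycle. So the rule is already pinned down on the smallest nontrivial strongly connected component, and the task is to extend this uniqueness consistently. This rigidity suggests aiming for a DSTC rule, so that by the fairness hierarchy proposition, CC comes for free and the entire analysis reduces to strongly connected tournaments: on non-strongly-connected $T$, set $r_i(T) = r_i(T|_{TC(T)})$ for $i \in TC(T)$ and zero otherwise.

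For strongly connected $T$ on $n \geq 3$ vertices, I would try a ``weighted King-of-the-Hill'' family: choose a first pivot $v$ with probability $p_v(T)$, give $v$ probability $q_v(T)$ of winning outright, and otherwise recurse on the subtournament obtained by eliminating $v$ together with the vertices $v$ dominates. The weights $p_v(T), q_v(T)$ would be chosen to respect tournament-isomorphism symmetry and to satisfy the 2-$\nm{1}$ linear inequalities exactly. The rule would be symmetrized so that automorphic vertices receive identical weights; this would also dramatically cut the number of free parameters. The explicit distributions reported in Figure~\ref{fig:non-isomorphic-4} for $n = 4$, together with the LP solutions the authors computed for $n \leq 6$, should be mined to guess the correct functional form of $p_v(T)$ and $q_v(T)$, e.g., as rational functions of out-degrees and sizes of dominated sets.

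Verification then proceeds by induction on $n$. Monotonicity and CC follow from the base case and the inductive structure with only bookkeeping. By Proposition~\ref{prop:mono-iff-one-sided-nm}, 2-$\nm{1}$ for a monotone rule reduces to the single one-sided inequality $r_j(T') - r_j(T) \leq 2\bigl(r_i(T) - r_i(T')\bigr)$ whenever $i \succ_T j$ is reversed to $j \succ_{T'} i$. I would prove this inequality by a direct term-by-term comparison between the recursive expansions of $r(T)$ and $r(T')$, grouping pivot choices $v$ into those that are affected by the reversal (mainly $v \in \{i, j\}$ and pivots whose dominated set changes) and those that are not. For the unaffected pivots, the inductive hypothesis applied to a smaller subtournament gives the bound; for the affected pivots, one needs a local calculation that leverages the exact form chosen for $p_v, q_v$.

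The main obstacle is that the 2-$\nm{1}$ constraint admits no slack: Theorem~\ref{thm:lower-bound-all-rules} shows $\lambda \geq 1 - 3\alpha$, so at $\alpha = 0$ every inequality must be met with no margin. This means the induction cannot afford any loss at any recursion level, forcing the weights to satisfy a tight system of equations rather than mere inequalities. Worse, an edge reversal can restructure the top cycle itself (vertices enter or leave $TC$, and ``dominated sets'' of pivots shift), so $r(T)$ and $r(T')$ may be defined via recursions on combinatorially different subtournaments, defeating a naive coupling. A fallback plan, if the explicit construction does not close, is to prove feasibility of the defining LP abstractly: symmetrize over the tournament-automorphism group to reduce variables, then exhibit an explicit interior feasible point (or a dual infeasibility certificate for the negation). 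Either route ultimately has to tame the combinatorial interaction between edge reversals and recursive tournament structure, which is precisely what made the computational approach stall at $n = 6$.
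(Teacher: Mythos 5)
You have not produced a proof here, and indeed the paper itself does not contain one: this statement is explicitly a \emph{conjecture}, left as a major open problem, for which the authors offer only computational evidence (LP feasibility for $n \leq 6$ and the explicit $n=4$ solutions in Figure~\ref{fig:non-isomorphic-4}). Your write-up is a research plan whose central object --- the weights $p_v(T), q_v(T)$ of the proposed weighted King-of-the-Hill recursion --- is never specified, and whose key inductive step (the term-by-term comparison establishing the one-sided inequality $r_j(T') - r_j(T) \leq 2(r_i(T) - r_i(T'))$) is not carried out. You candidly identify the obstruction yourself: at $\lambda = 1$, $\alpha = 0$ the constraints have zero slack, and an edge reversal can restructure the top cycle and the pivots' dominated sets, so the two recursions being compared run over combinatorially different subtournaments. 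That is exactly the difficulty that makes this an open problem, and naming it is not the same as overcoming it.

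Several of your preliminary observations are sound and worth keeping: the rigidity argument that a dominant $3$-cycle must receive the uniform distribution $(1/3,1/3,1/3)$ under any monotone, CC, 2-$\nm{1}$ rule is correct (summing the three tight inequalities forces $r_1+r_2+r_3 = 1$ and then each inequality to hold with equality), and the reduction of 2-$\nm{1}$ to the one-sided inequality via monotonicity is valid. One caution, though: restricting attention to DSTC rules is a genuine strengthening of the conjecture, not a harmless simplification. The paper shows every DSTC rule it examines fails to be 2-$\nm{1}$, and Theorem~\ref{thm:DSTC-reduction} shows DSTC forces the gains from manipulation on small tournaments to be bounded by the limiting $\alpha$, so it is conceivable that the conjecture is true while no DSTC witness exists. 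If your construction stalls, consider whether the extra rigidity you are imposing is the culprit before concluding anything about the conjecture itself.
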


We now consider several tournament rules and examine their trade-offs between $\alpha$ and $\lambda$.
Table \ref{tab:summary} provides a summary of our findings.
Interestingly, the superman-kryptonite tournament identified by Schneider et al.~\cite{schneider2016condorcetconsistent} (and its variants) is responsible for all our lower bounds, suggesting that it is especially problematic.
We note that Iglesias et al.~\cite{votingtree} identified a general class of tournaments termed perfect manipulator tournaments that contains the superman-kryptonite tournament while studying a different problem.

\begin{definition}[Superman-kryptonite tournament]
The superman kryptonite tournament on $[n]$ has $i \succ_T j$ whenever $i < j$, except $n \succ_T 1$.
In particular, superman $1$ dominates all agents but kryptonite $n$, and $n$ is dominated by all agents except $1$.
\end{definition}

\begin{figure}
    \centering
    \includegraphics[width=\linewidth]{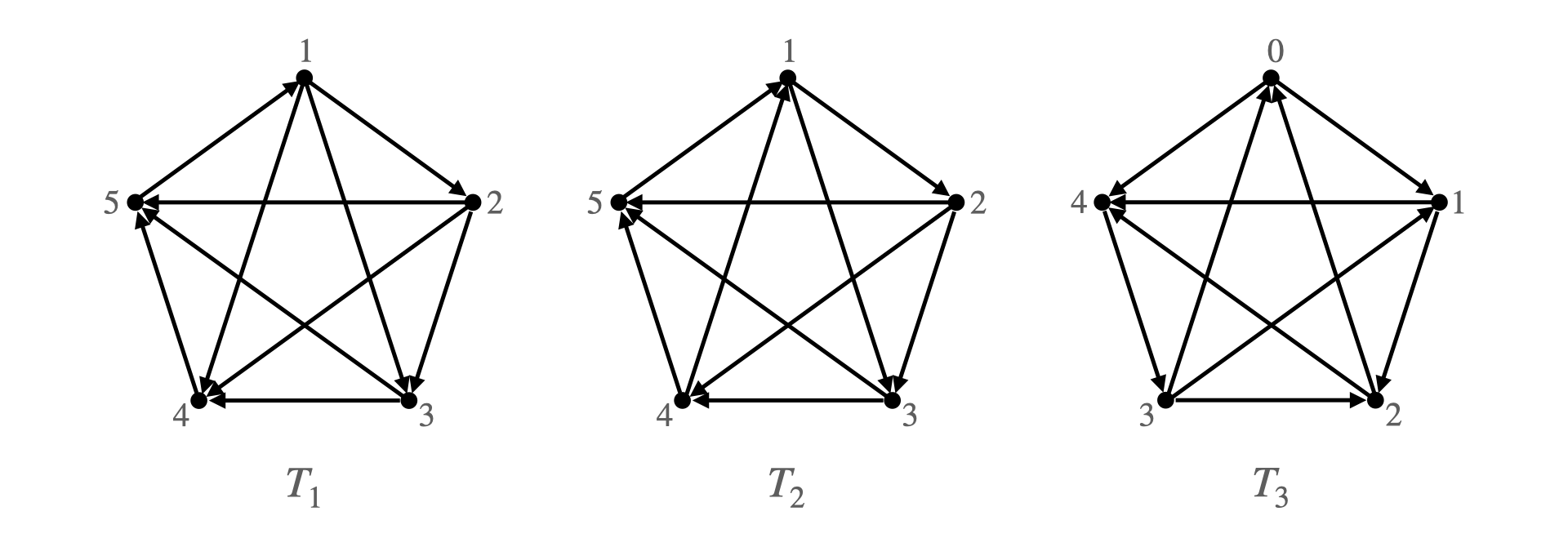}
    \caption{The tournaments that lead to our lower bounds. 
        $T_1$ is the superman-kryptonite tournament (here, on 5 players) that leads to most of our lower bounds.
        $T_2$ is the tournament that leads to the $\alpha \geq 1/10$ lower bound for RKotH.
        $T_3$ is the tournament (here, on 5 players) that leads to our lower bound for PRSL.
    }
    \label{fig:lower-bound-tournaments}
\end{figure}

\begin{theorem}\label{theorem:RSEB-RKotH-no-lambda}
Let $\lambda \geq 0$.
If RSEB satisfies 2-$\nm{\lambda}$-$\alpha$, then $\alpha \geq \Omega(1/n)$.
If RKotH or PR satisfy this property, then $\alpha \geq \Omega(1)$.
That is, RSEB, RKotH, and PR are always pairwise manipulable regardless of $\lambda$.
\end{theorem}

\begin{proof}
Consider the superman-kryptonite tournament $T \in \mathcal{T}_n$ where $n \geq 4$.
Observe that $r^{RSEB}_1(T) = 1 - 1/n$ since the superman loses if and only if she is paired with the kryptonite in the first round of the bracket.
Moreover, $r^{RSEB}_n(T) = 0$ since the winner of a bracket must win at least $\ceil{\log_2 n}$ matches.
If the superman and kryptonite collude to make the superman the Condorcet winner, then the superman gains $1/n$ in probability, while the kryptonite's probability of winning remains the same, so $\alpha \geq 1/n$.

To see the result for RKotH, consider the tournament $T \in \mathcal{T}_5$ in which $i \succ_T j$ whenever $i < j$, except both $4, 5 \succ_T 1$.
Note that since RKotH is cover consistent and $4$ covers $5$, we have that $r_5(T) = 0$ .
Meanwhile, $r_1(T) = 2/5$ since 1 wins if and only if the agent who is chosen first is not among 1, $4$, and $5$.
If 1 and $5$ reverse the outcome of their match, then in the resulting tournament $T'$, 1 will win with probability $1/2$ since by DSTC, we can restrict our attention to $T'|_{[4]}$ and 1 wins in $T'|_{[4]}$ if and only if the agent who is chosen first is not among 1 and $4$.
Meanwhile $5$ remains covered in $T'$.
Thus, the superman gains $1/10$ in probability, while the kryptonite's probability of winning remains the same, so $\alpha \geq 1/10$.
Since RKotH is DSTC, this problematic tournament on four agents remains problematic when there are more agents.

Now, to see the result for PR, consider the superman-kryptonite tournament $T$ on four agents.
The associated system of linear equations is
\begin{align*}
    r_1(T) &= \textstyle r_2(T) + \frac{1}{2} r_3(T) \\
    r_2(T) &= \textstyle \frac{1}{2} r_3(T) + \frac{1}{2} r_4(T) \\
    r_3(T) &= \textstyle \frac{1}{2} r_4(T) \\
    r_4(T) &= r_1(T) \\
    r_1(T) + r_2(T) + r_3(T) + r_4(T) &= 1
\end{align*}
The solution to this system is $r_1(T) = 4/13, r_2(T) = 3/13, r_3(T) = 2/13, r_4(T) = 4/13$.
Now, consider the manipulation between teams 1 and 3.
Note that this manipulation simply ``rotates'' the original tournament clockwise.
Thus, letting $T'$ denote the resulting tournament, $r_1(T') = 4/13, r_2(T') = 4/13, r_3(T') = 3/13, r_4(T') = 2/13$.
Note that 3 gains $1/13$ in probability, while 1's probability of winning remains the same, so $\alpha \geq 1/13$.
Since PR is DSTC, this problematic tournament on four agents remains problematic when there are more agents.
\qed
\end{proof}

\begin{theorem}\label{thm:ICR-RDM-RVC-TCR-PRSL}
If ICR satisfies 2-$\nm{\lambda}$-$\alpha$ for some $\lambda \geq 0$, then $\lambda \geq (1 - O(\alpha))\Omega(n^2)$.
If RDM satisfies this property, then $\lambda \geq (1 - O(n \alpha))\Omega(n)$.
If RVC, TCR, or PRSL satisfy this property, then $\lambda \geq (1 - O(\alpha))\Omega(n)$.
\end{theorem}

\begin{proof}
Let $T$ denote the superman-kryptonite tournament on $n$ agents.
Under ICR, the superman wins if and only if the kryptonite is chosen before her in the first $n - 2$ rounds.
Thus, $r^{ICR}_1(T) = \frac{1}{2}\left(1 - \frac{2}{n(n-1)}\right) = \frac{1}{2} - \frac{1}{n(n-1)}$.
Meanwhile, the kryptonite wins if and only if neither her nor the superman are chosen in the first $n - 2$ rounds.
This event happens with probability $r^{ICR}_n(T) = \frac{2}{n(n-1)}$.
Thus, if ICR satisfies 2-$\nm{\lambda}$-$\alpha$, then $\lambda \geq \frac{1 - r^{ICR}_1(T) - \alpha}{r^{ICR}_n(T)} - 1 = \Paren{\frac{1}{4} - \frac{\alpha}{2}}n(n - 1) - 1/2$.

Under RDM, the kryptonite wins if and only if she is not chosen in the first $n - 2$ rounds, so $r^{RDM}_n(T) = \frac{2}{n(n-1)}$.
Note that the superman loses if and only if she is paired with the kryptonite in some round.
The probability that this event occurs if $2/n$, so $r^{RDM}_1(T) = 1 - 2/n$.
Therefore, $\lambda \geq \frac{1 - r^{RDM}_1(T) - \alpha}{r^{RDM}_n(T)} - 1 = \lambda \geq \Paren{1 - \frac{n\alpha}{2}}(n-1) - 1$.

Under RVC, the superman wins if and only if she comes after the kryptonite in the chosen permutation and they are not the first two agents.
Thus, $r^{RVC}_1(T) = \frac{1}{2} - \frac{1}{n(n-1)}$.
The kryptonite on the other hand wins if and only if she comes last in the chosen permutation, so $r^{RVC}_n(T) = 1/n$.
Thus, $\lambda \geq \frac{1 - r^{RVC}_1(T) - \alpha}{r^{RVC}_n(T)} - 1 = \Paren{\frac{1}{2} - \alpha}n - \frac{n-2}{n-1}$.

Under TCR, the superman and kryptonite both win with probability $\frac{1}{n}$ since all agents are in the top cycle.
Then, $\lambda \geq \frac{1 - r^{TCR}_1(T) - \alpha}{r^{TCR}_n(T)} - 1 = (1 - \alpha)n - 2$.

Now, consider the following tournament on $n = 2k + 1$ teams, denoted $0, 1, \dots, n-1$.
Have team $n-1$ defeat team $n-2$.
Have teams $0, 1, \dots, n-3$ lose to team $n-2$ and defeat team $n-1$.
For $i = 0, \dots, n-3$, have team $i$ defeat teams $(i + 1) \bmod{(n-2)}, \dots, (i + k - 1) \bmod{(n-2)}$ and lose to teams $(i-1) \bmod{(n-2)}, \dots, (i-k+1) \bmod{(n-2)}$.
Since teams $0, \dots, n-3$ are indistinguishable from each other (in particular, they will have the same mean return time), $r_0(T) = \dots = r_{n-3}(T)$.
Thus, it suffices to consider the following reduced system:
\begin{align*}
    \textstyle r_{n-2}(T) 
        = \frac{1}{2} r_{n-2}(T) + \frac{2}{n+1} \sum_{i=0}^{n-3} r_i(T) &= \textstyle \frac{1}{2} r_{n-2}(T) + \frac{2(n-2)}{n+1} r_0(T) \\
    r_{n-1}(T) &= \textstyle \frac{1}{2} r_{n-2}(T) + \frac{1}{n-1}r_{n-1}(T) \\
    (n-2)r_0(T) + r_{n-2}(T) + r_{n-1}(T) &= \textstyle \sum_{i=0}^{n-1} r_i(T) = 1
\end{align*}
The solution is
\begin{gather*}
    \textstyle r_0(T) = \dots = r_{n-3}(T) = \frac{n+1}{2(n-1)} \left(\frac{2(n-2)}{n-1} + \frac{(n-2)(n+1)}{2(n-1)} + 1\right)^{-1} \\
    \textstyle r_{n-2}(T) = \frac{2(n-2)}{n-1} \left(\frac{2(n-2)}{n-1} + \frac{(n-2)(n+1)}{2(n-1)} + 1\right)^{-1} \\
    \textstyle r_{n-1}(T) = \left(\frac{2(n-2)}{n-1} + \frac{(n-2)(n+1)}{2(n-1)} + 1\right)^{-1}
\end{gather*}
To conclude, note that if teams $n-2$ and $n-1$ were to manipulate, then team $n-2$ would become the Condorcet winner in the resulting tournament $T'$.
Thus, if PR-SL is 2-$\nm{\lambda}$, then
\[
    \textstyle \lambda \geq \frac{1 - r_{n-2}(T) - \alpha}{r_{n-1}(T)} - 1 \geq \frac{(1 - \alpha)(n-2)(n+1)}{2(n-1)} - 3\alpha
\]
\qed
\end{proof}

\begin{table}
    \caption{Performance summary. 
    Only the strongest fairness properties satisfied by each tournament rule are listed.
    If a stronger or incomparable fairness property is not listed, then the tournament rule does not satisfy it.}
    \centering
    \begin{tabular}{ c || c | c | c | c}
        Rule & Monotone? & Fairness & 2-$\nm{\lambda}$-$\alpha$ & 2-$\nm{\lambda}$-$f(n)$ \\
        \hline
        ICR~\cite{altman2010nonmanipulable} & Yes~\cite{altman2010nonmanipulable} & DSTC & $\lambda \geq \Paren{\frac{1}{2} - \alpha}\Omega(n^2)$ & $f(n) \geq 1/2$ \\
        RVC~\cite{altman2010nonmanipulable} &  Yes~\cite{altman2010nonmanipulable} & DSTC & $\lambda \geq \Paren{\frac{1}{2} - \alpha}\Omega(n)$ & $f(n) \geq 1/2$ \\
        TCR~\cite{altman2010nonmanipulable} & Yes~\cite{altman2010nonmanipulable} & DSTC & $\lambda \geq (1 - \alpha)\Omega(n)$ & $f(n) \geq 1$ \\
        RSEB~\cite{schneider2016condorcetconsistent} & Yes~\cite{schneider2016condorcetconsistent} & TCC & $\alpha \geq 1/n$ & $f(n) \geq \varepsilon(\lambda) > 0$ \\
        RKotH~\cite{schvartzman2019approximately} & Yes~\cite{schvartzman2019approximately} & cover~\cite{schvartzman2019approximately}, DSTC & $\alpha \geq 1/10$ & $f(n) \geq 1/10$ \\
        RDM~\cite{ding2021approximately} & Yes & DSTC & $\lambda \geq \Paren{1 - \frac{n\alpha}{2}}\Omega(n)$ & $f(n) \geq \frac{\lambda - 1}{\lambda(2\lambda + 1)}$ \\
        PR~\cite{pagerank-tournament-rule} & ? & DSTC & $\alpha \geq 1/13$ & $f(n) \geq 1/13$ \\
        PRSL~\cite{pagerank-tournament-rule} & ? & DSTC & $\lambda \geq (1 - \alpha)\Omega(n)$ & $f(n) \geq 1$
    \end{tabular}
    \label{tab:summary}
\end{table}

\section{Reductions}

In a separate direction, we consider fair tournament rules that for fixed $\lambda$ become increasingly non-manipulable with the number of agents.
Formally, we sought rules that satisfy 2-$\nm{\lambda}$-$f(n)$ where $n$ is the number of agents and $f$ is a non-negative, non-increasing function.
Under notions of fairness stronger than Condorcet consistency, it turns out this problem is just as hard as finding rules that satisfy 2-$\nm{\lambda}$-$(\lim_{n \to \infty} f(n))$.

\begin{theorem}\label{thm:DSTC-reduction}
Let $\lambda \geq 0$ and $f \geq 0$ be a non-increasing function such that $f(n) \to \alpha$.
A DSTC tournament rule is 2-$\nm{\lambda}$-$f(n)$ if and only if it is 2-$\nm{\lambda}$-$\alpha$.
\end{theorem}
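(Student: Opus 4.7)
The plan is to prove the two directions of the equivalence separately. The forward direction (2-$\nm{\lambda}$-$\alpha \Rightarrow$ 2-$\nm{\lambda}$-$f(n)$) is immediate and does not actually use DSTC: since $f$ is non-increasing and $f(n) \to \alpha$, we have $f(n) \geq \alpha$ for all $n$, so the 2-$\nm{\lambda}$-$\alpha$ inequality is uniformly tighter than the 2-$\nm{\lambda}$-$f(n)$ one. So I will spend essentially no time on it.

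The substantive direction is the reverse implication, and DSTC is exactly the tool that lets us push the bound. Fix a tournament $T \in \mathcal{T}_n$ and an $\{i,j\}$-adjacent tournament $T' \in \mathcal{T}_n$. For any $N \geq n$, I would form the ``padded'' tournament $\tilde{T} \in \mathcal{T}_N$ obtained by adding $N - n$ new agents, declaring all of them to lose to every agent in $[n]$, and putting an arbitrary (but fixed) tournament on the new agents. Construct $\tilde{T}'$ identically from $T'$ using the same added agents and the same relations among them. Then $[n]$ is a dominant sub-tournament in both $\tilde T$ and $\tilde T'$, and $\tilde T, \tilde T'$ are $\{i,j\}$-adjacent tournaments on $[N]$.

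By DSTC we have $r_k(\tilde T) = r_k(T)$ and $r_k(\tilde T') = r_k(T')$ for every $k \in [n]$, in particular for $k \in \{i,j\}$. Applying the hypothesis that $r$ is 2-$\nm{\lambda}$-$f(N)$ to the pair $\tilde T, \tilde T'$ gives
\begin{equation*}
r_i(T') + r_j(T') \;\leq\; r_i(T) + r_j(T) + \lambda \max\{r_i(T) - r_i(T'),\, r_j(T) - r_j(T')\} + f(N).
\end{equation*}
Since this holds for every $N \geq n$ and the only $N$-dependent term on the right is the additive $f(N)$, I take $N \to \infty$ and use $f(N) \to \alpha$ to conclude the desired 2-$\nm{\lambda}$-$\alpha$ inequality for the original pair $(T, T')$.

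I do not expect any real obstacle here; the only points requiring care are (i) making sure the padding construction uses the \emph{same} additions on both sides so that $\tilde T$ and $\tilde T'$ remain $\{i,j\}$-adjacent, and (ii) noting that DSTC applies to both tournaments because the added agents are Condorcet losers relative to $[n]$ in each. The statement covers the case $\alpha = 0$ as well (yielding a characterization of exact 2-$\nm{\lambda}$ rules under DSTC), and the same proof goes through without modification.
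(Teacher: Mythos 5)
Your proof is correct and uses essentially the same idea as the paper: pad $T$ and $T'$ with Condorcet losers so that $[n]$ is a dominant sub-tournament, use DSTC to equate the manipulation gain in the padded tournament with that in the original, and let the padding size go to infinity so the additive slack tends to $\alpha$. The paper phrases the substantive direction as a contrapositive with an explicit $\varepsilon$ rather than your direct limit, but this is a cosmetic difference.
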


The idea behind the proof is as follows: by DSTC, the gains from manipulation in a tournament $T$ on $n$ agents are exactly the same as the gains from manipulation among these agents in a larger tournament on $n' > n$ agents in which $T$ is a dominant subtournament.
Thus, the gains from manipulation in $T$ are in fact at most $f(n')$ for all $n' > n$ and hence, at most $\lim_{n' \to \infty} f(n')$.

A similar but weaker result holds for top cycle consistent rules.

\begin{theorem}\label{theorem:TCC_reduction}
Let $\lambda \geq 1$ and $f \geq 0$ be a non-increasing function such that $f(n) \to \alpha$.
There exist a TCC tournament rule satisfying 2-$\nm{\lambda}$-$f(n)$ if and only if there exists a TCC tournament rule satisfying 2-$\nm{\lambda}$-$\alpha$.
\end{theorem}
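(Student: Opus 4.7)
The plan is to follow the strategy of Theorem~\ref{thm:DSTC-reduction}, but since TCC (unlike DSTC) does not force within-top-cycle probabilities to be invariant under the addition of a Condorcet loser, I would replace the exact-equality step with a compactness argument. The easy direction is immediate: because $f$ is non-increasing with $f(n) \to \alpha$, we have $f(n) \ge \alpha$ for every $n$, so any TCC tournament rule satisfying 2-$\nm{\lambda}$-$\alpha$ automatically satisfies 2-$\nm{\lambda}$-$f(n)$.

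For the substantive direction, let $r$ be a TCC rule satisfying 2-$\nm{\lambda}$-$f(n)$. For each $n$ and each $k \ge 1$, I would define an auxiliary rule $r^{(n,k)}$ on $\mathcal{T}_n$ by embedding: given $T \in \mathcal{T}_n$, form $\tilde T_k \in \mathcal{T}_{n+k}$ by adjoining $k$ new agents (with some fixed internal orientation) that are Condorcet losers to every element of $[n]$. Because $TC(\tilde T_k) = TC(T) \subseteq [n]$, TCC forces $r^{(n+k)}_j(\tilde T_k) = 0$ for all $j > n$, so $r^{(n,k)}_i(T) := r^{(n+k)}_i(\tilde T_k)$ for $i \in [n]$ is a valid probability distribution on $[n]$ and itself TCC. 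The key observation is that if $T, T' \in \mathcal{T}_n$ are $\{i,j\}$-adjacent, then using the same added agents with the same internal orientation makes $\tilde T_k$ and $\tilde T'_k$ also $\{i,j\}$-adjacent in $\mathcal{T}_{n+k}$, and the $i,j$-coordinates of $r^{(n,k)}$ agree with those of $r^{(n+k)}$ on the lifted pair. Applying the 2-$\nm{\lambda}$-$f(n+k)$ property of $r$ to $\tilde T_k$ and $\tilde T'_k$ then shows that $r^{(n,k)}$ itself satisfies 2-$\nm{\lambda}$-$f(n+k)$.

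To finish, I would fix $n$ and extract a limit. Since each $r^{(n,k)}$ lies in the compact set $(\Delta^n)^{|\mathcal{T}_n|}$, some subsequence $r^{(n,k_m)}$ converges pointwise to a limit $r^{(n,*)}$. Both TCC (a conjunction of equalities $r_j(T) = 0$ for $j \notin TC(T)$) and the 2-$\nm{\lambda}$-$f(n+k_m)$ inequalities are closed conditions in the coordinates, and $f(n+k_m) \to \alpha$, so $r^{(n,*)}$ is a TCC rule on $n$ agents satisfying 2-$\nm{\lambda}$-$\alpha$. Assembling the family $r^* := \{r^{(n,*)}\}_{n \ge 1}$ yields the desired rule. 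The step most in need of care is the correspondence between manipulations in $T$ and manipulations in $\tilde T_k$ with identical $i,j$-probabilities: TCC is precisely what pins the added Condorcet losers at probability zero in both $\tilde T_k$ and $\tilde T'_k$, so the $f(n+k)$ bound transfers verbatim to the lifted manipulation, and the whole construction stays inside the class of TCC rules.
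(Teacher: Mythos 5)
Your proposal is correct and follows essentially the same route as the paper's proof: embed each $n$-agent tournament as a dominant sub-tournament of a larger one (so TCC pins the added Condorcet losers at probability zero and the restriction is a genuine TCC rule satisfying the $f$-bound at the larger size), then extract a convergent subsequence by compactness and pass the closed TCC and 2-$\nm{\lambda}$ conditions to the limit. The only cosmetic difference is that you make the easy direction ($f(n)\ge\alpha$ from monotonicity) and the adjacency-lifting step explicit, which the paper leaves implicit.
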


The proof is similar to that of Theorem \ref{thm:DSTC-reduction}.
However, because we do not have DSTC, we cannot directly relate the gains from manipulation in tournaments on $n$ agents to those in tournaments on $n' > n$ agents.
Nonetheless, we can define a tournament rule on $n$ agents as the limit point of a sequence of tournament rules on $n'$ agents for all $n' > n$.
The gains from manipulation under this limit point will then be at most $\lim_{n' \to \infty} f(n')$.

\begin{theorem}\label{thm:rules-f(n)}
If ICR, RVC, TCR, RKotH, PR, or PRSL satisfy 2-$\nm{\lambda}$-${f(n)}$ for some fixed $\lambda \geq 1$ and some non-increasing function $f \geq 0$, then $f \geq \Omega(1)$.
If RDM satisfies this property, then $f \geq \Omega(1/\lambda)$.
If RSEB satisfies this property, then $f \geq \varepsilon(\lambda)$ where $\varepsilon(\lambda)$ is some strictly positive function of $\lambda$.
\end{theorem}

\section{Discussion}

In this work, we introduced a partially transferable utility model to study the tension between fairness and strategic robustness in the design of tournaments.
In our model, two agents are willing to fix the outcome of their match only if their joint gain is greater than $\lambda$ times any of their losses.
Theorem \ref{thm:lower-bound-all-rules} demonstrates that tournament designers cannot prevent manipulations while maintaining some degree of fairness if agents care about their chances of winning less than twice as much as their opponents' chances.
However, it is possible that caring twice as much is sufficient for the existence of fair and non-manipulable tournament rules.
Unfortunately, we do not know of any tournament rule that achieves this, and Theorems \ref{theorem:RSEB-RKotH-no-lambda} and \ref{thm:ICR-RDM-RVC-TCR-PRSL} show that the tournament rules previously studied in this line of work require agents to care at least $\Omega(n)$ times more about their own chances of winning than their opponents' in order to be non-manipulable.
We leave finding a Condorcet consistent tournament rule that is non-pairwise-manipulable when $\lambda = 1$ as a major open problem.

Theorems \ref{thm:DSTC-reduction} and \ref{theorem:TCC_reduction} may help in resolving this question.
If proving that a DSTC tournament rule witnesses vanishing gains from manipulation is easier than proving that it is non-manipulable (e.g., one can only get upper bounds that approach 0), then Theorem \ref{thm:DSTC-reduction} would imply that the rule is in fact non-manipulable.
On the other hand, if one finds a TCC tournament rule that witnesses vanishing gains from manipulation, then this rule, together with Theorem \ref{theorem:TCC_reduction}, would yield a non-constructive proof that a non-manipulable rule exists.

\begin{credits}
\subsubsection{\ackname} This work was carried out while one of the authors, Eric Xue, was a participant in the 2021 DIMACS REU program at Rutgers University, supported by NSF grant CCF-1852215, under the supervision of Ariel Schvartzman (who at the time was affiliated with DIMACS) and David Pennock. 

\subsubsection{\discintname}
The authors have no competing interests to declare that are
relevant to the content of this article.
\end{credits}
%
%
%
\bibliographystyle{plainnat}
\bibliography{adt2024arxiv}

\appendix
\section{Omitted Proofs}

\begin{proposition}\label{prop:pnm-iff-nm-infty}
A tournament rule is 2-$\pnm$ if and only if it is 2-$\nm{\infty}$.
\end{proposition}

\begin{proof}
Suppose tournament rule $r$ is not 2-$\pnm$.
Then, there exist distinct agents $i, j \in [n]$ and a pair of $\{i,j\}$-adjacent tournaments $T \not= T' \in \mathcal{T}_n$ such that WLOG $r_i(T') - r_i(T) > 0$ and $r_j(T') - r_j(T) \geq 0$.
Thus, 
\[
    r_i(T') + r_j(T') - r_i(T) - r_j(T) > 0 \geq \lim_{\lambda \to \infty}\lambda\max\{r_i(T) - r_i(T'), r_j(T) - r_j(T')\}
\]
so $r$ is not 2-$\nm{\infty}$. 

Conversely, take $r$ to be 2-$\pnm$, so for all distinct agents $i, j \in [n]$ and $\{i,j\}$-adjacent $T \not= T' \in \mathcal{T}_n$, either (1) WLOG $r_i(T') - r_i(T) < 0$, in which case 
\[
    r_i(T') + r_j(T') - r_i(T) - r_j(T) < +\infty = \lim_{\lambda \to \infty} \lambda\max\{r_i(T) - r_i(T'), r_j(T) - r_j(T')\}
\]
or (2) $\max\{r_i(T') - r_i(T), r_j(T') - r_j(T)\} \leq 0$, in which case \[
    r_i(T') + r_j(T') - r_i(T) - r_j(T) \leq 0 \leq \lim_{\lambda \to \infty} \lambda\max\{r_i(T) - r_i(T'), r_j(T) - r_j(T')\}
\]
Thus, $r$ is 2-$\nm{\infty}$.
\end{proof}

\begin{proposition}\label{prop:mono-iff-one-sided-nm}
Let $r$ be a 2-$\nm{\lambda}$ tournament rule for some $\lambda > 0$, then the following two statements are equivalent.
\begin{enumerate}
    \setlength{\itemsep}{0pt}
    \item $r$ is monotone
    \item For all $i \not= j \in [n]$ and $\{i, j\}$-adjacent tournaments $T \not= T' \in \mathcal{T}_n$ such that $i \prec_T j$, $r_i(T') - r_i(T) \leq (\lambda + 1) (r_j(T) - r_j(T'))$
\end{enumerate}
\end{proposition}

\begin{proof}
Let $T \not= T' \in \mathcal{T}_n$ be $\{i, j\}$-adjacent tournaments  such that $i \prec_T j$, and suppose $r$ is monotone.
Then, by monotonicity, $r_i(T) - r_i(T') \leq 0 \leq r_j(T) - r_j(T')$, so
\[
    r_i(T') - r_i(T) \leq r_j(T) - r_j(T') + \lambda \max\{r_i(T) - r_i(T'), r_j(T) - r_j(T')\} = (\lambda + 1) (r_j(T) - r_j(T'))
\]
where the inequality follows from the fact that $r$ is 2-$\nm{\lambda}$.

Now, suppose the second statement holds.
Applying the second statement twice yields
\begin{align*}
    r_i(T') - r_i(T) &\leq (\lambda + 1) (r_j(T) - r_j(T')) \\
    r_j(T) - r_j(T') &\leq (\lambda + 1) (r_i(T') - r_i(T))
\end{align*}
Multiplying the second inequality by $\lambda + 1$, adding the result to the first inequality, and simplifying yields
\[
    \lambda(\lambda + 2)(r_i(T') - r_i(T)) \geq 0
\]
Since $r$ is 2-$\nm{\lambda}$ for $\lambda > 0$, this inequality implies $r_i(T') \geq r_i(T)$.
\end{proof}

\begin{theorem}\label{thm:DSTC-reduction}
Let $\lambda \geq 0$ and $f \geq 0$ be a non-increasing function such that $f(n) \overset{n \to \infty}{\to} \alpha$.
A DSTC tournament rule is 2-$\nm{\lambda}$-$f(n)$ if and only if it is 2-$\nm{\lambda}$-$\alpha$.
\end{theorem}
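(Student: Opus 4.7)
The plan is to prove both directions separately, with the reverse direction being the substantive one.

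For the forward direction, suppose $r$ is 2-$\nm{\lambda}$-$\alpha$. Since $f$ is non-increasing and $f(n) \to \alpha$, we have $f(n) \geq \alpha$ for every $n$. Hence any inequality of the form
\[
r_i(T') + r_j(T') \leq r_i(T) + r_j(T) + \lambda \max\{r_i(T) - r_i(T'), r_j(T) - r_j(T')\} + \alpha
\]
immediately yields the same statement with $f(n)$ in place of $\alpha$, so $r$ is 2-$\nm{\lambda}$-$f(n)$.

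For the reverse direction, the key idea is to use DSTC to embed any manipulation in an arbitrarily large tournament. Suppose $r$ is DSTC and 2-$\nm{\lambda}$-$f(n)$, and fix any $n$, any $T \in \mathcal{T}_n$, any $\{i,j\}$-adjacent $T' \in \mathcal{T}_n$, and any $n' > n$. I would construct $T^\star \in \mathcal{T}_{n'}$ by appending $n' - n$ new agents that lose to every agent in $[n]$ (with arbitrary relations among themselves), so that $T^\star|_{[n]} = T$ and $T^\star|_{[n]}$ is a dominant sub-tournament of $T^\star$. Define $T^{\star\prime} \in \mathcal{T}_{n'}$ identically except using $T'$ in place of $T$ on $[n]$. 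Note that $T^\star$ and $T^{\star\prime}$ remain $\{i,j\}$-adjacent since they only differ on the match between $i$ and $j$. By DSTC, $r_i(T^\star) = r_i(T)$, $r_j(T^\star) = r_j(T)$, $r_i(T^{\star\prime}) = r_i(T')$, and $r_j(T^{\star\prime}) = r_j(T')$. Applying 2-$\nm{\lambda}$-$f(n')$ to the pair $(T^\star, T^{\star\prime})$ and substituting in these equalities gives
\[
r_i(T') + r_j(T') \leq r_i(T) + r_j(T) + \lambda \max\{r_i(T) - r_i(T'), r_j(T) - r_j(T')\} + f(n').
\]
Since this inequality holds for every $n' > n$ and $f(n') \to \alpha$, taking $n' \to \infty$ replaces $f(n')$ with $\alpha$, establishing that $r$ is 2-$\nm{\lambda}$-$\alpha$.

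The only subtlety, and what I would be careful to verify, is that the embedding $T^\star$ really makes $[n]$ a dominant sub-tournament regardless of how matches among the newly appended agents are oriented; this follows directly from the definition since every appended agent loses to every agent in $[n]$. After that, the argument is essentially a DSTC-preserving lifting followed by the limit $n' \to \infty$, so no further obstacle arises.
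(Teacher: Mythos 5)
Your proof is correct and follows essentially the same approach as the paper: both arguments hinge on embedding the $n$-agent manipulation as a dominant sub-tournament of an $n'$-agent tournament, invoking DSTC to preserve the relevant probabilities, and letting $n' \to \infty$ so that $f(n')$ tends to $\alpha$. The only cosmetic difference is that the paper argues the substantive direction by contrapositive while you argue it directly.
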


\begin{proof}
The backward implication is trivial, so we focus on the forward implication.
Suppose a DSTC tournament rule $r$ is not 2-$\nm{\lambda}$-$\alpha$ so that there exist $\varepsilon > 0$, $m \in \NN$, distinct agents $i, j \in [m]$, and a pair of $\{i,j\}$-adjacent tournaments $T \not= T' \in \mathcal{T}_{m}$ such that 
\[
    r_i(T') + r_j(T') - r_i(T) - r_j(T) - \lambda\max\{r_i(T) - r_i(T'), r_j(T) - r_j(T')\} \geq \alpha + \varepsilon
\]
Since $f(n) \overset{n \to \infty}{\to} \alpha$, there exists $n > m$ such that $f(n) < \alpha + \varepsilon$.
Now, consider the tournament $T^{(n)} \in \mathcal{T}_n$ in which $T$ is a dominant sub-tournament.
Let $(T')^{(n)}$ denote the tournament $\{i,j\}$-adjacent to $T^{(n)}$ such that $(T')^{(n)} \not= T^{(n)}$.
Since $r$ satisfies DSTC,
\begin{align*}
    r_i\Paren{(T')^{(n)}} & {} + r_j\Paren{(T')^{(n)}} - r_i\Paren{T^{(n)}} - r_j\Paren{T^{(n)}}  \\
        & {} - \lambda\max\CrBr{r_i\Paren{T^{(n)}} - r_i\Paren{(T')^{(n)}}, r_j\Paren{T^{(n)}} - r_j\Paren{(T')^{(n)}}} \geq \alpha + \varepsilon > f(n)
\end{align*}
so $r$ is not 2-$\nm{\lambda}$-$f(n)$.
\end{proof}

\begin{theorem}\label{theorem:TCC_reduction}
Let $\lambda \geq 1$ and $f \geq 0$ be a non-increasing function such that $f(n) \overset{n \to \infty}{\to} \alpha$.
There exist a TCC tournament rule satisfying 2-$\nm{\lambda}$-$f(n)$ if and only if there exists a TCC tournament rule satisfying 2-$\nm{\lambda}$-$\alpha$.
\end{theorem}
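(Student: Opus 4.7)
The backward direction is immediate: since $f$ is non-increasing with $f(n) \to \alpha$, we have $f(n) \geq \alpha$ for all $n$, so any rule that is $2$-$\nm{\lambda}$-$\alpha$ is also $2$-$\nm{\lambda}$-$f(n)$. The work is in the forward direction, where the goal is to build a new TCC rule $r^*$ that satisfies $2$-$\nm{\lambda}$-$\alpha$ starting from a given TCC rule $r$ that satisfies $2$-$\nm{\lambda}$-$f(n)$. The obstacle compared to Theorem~\ref{thm:DSTC-reduction} is that $r$'s values on a tournament $T \in \mathcal{T}_n$ need not agree with its values on any extension of $T$ to more agents, so we cannot simply transfer inequalities; instead, we pass to a limit.

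The construction goes as follows. For every $T \in \mathcal{T}_n$ and every $n' > n$, let $T^{(n')} \in \mathcal{T}_{n'}$ be the canonical extension in which agents $n+1,\ldots,n'$ are Condorcet losers with respect to $[n]$ (with some fixed convention on the edges among the new agents). Because every added agent loses to every agent of $[n]$, we have $TC(T^{(n')}) = TC(T) \subseteq [n]$, so by TCC of $r$, the distribution $r(T^{(n')})$ is supported on $[n]$ and hence lies in the compact simplex $\Delta^n$. There are countably many pairs $(n,T)$ with $T \in \mathcal{T}_n$, so a standard diagonal argument using Bolzano--Weierstrass in each $\Delta^n$ produces a subsequence $n_1 < n_2 < \cdots$ along which $r\bigl(T^{(n_k)}\bigr)$ converges for every $T$. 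Define
\[
    r^*(T) \;:=\; \lim_{k \to \infty} r\bigl(T^{(n_k)}\bigr) \in \Delta^n.
\]

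Now I verify the two desired properties of $r^*$. For TCC, each $r\bigl(T^{(n_k)}\bigr)$ is supported on $TC(T^{(n_k)}) = TC(T)$, and this closed condition is preserved in the limit, so $r^*_i(T) = 0$ for $i \notin TC(T)$. For $2$-$\nm{\lambda}$-$\alpha$, fix $\{i,j\}$-adjacent $T \neq T' \in \mathcal{T}_n$ and observe that the canonical extensions $T^{(n_k)}$ and $(T')^{(n_k)}$ remain $\{i,j\}$-adjacent in $\mathcal{T}_{n_k}$. Applying $2$-$\nm{\lambda}$-$f(n_k)$ to $r$ gives
\[
    r_i\bigl((T')^{(n_k)}\bigr) + r_j\bigl((T')^{(n_k)}\bigr) - r_i\bigl(T^{(n_k)}\bigr) - r_j\bigl(T^{(n_k)}\bigr) \leq \lambda \max\Bigl\{r_i\bigl(T^{(n_k)}\bigr) - r_i\bigl((T')^{(n_k)}\bigr),\, r_j\bigl(T^{(n_k)}\bigr) - r_j\bigl((T')^{(n_k)}\bigr)\Bigr\} + f(n_k),
\]
and since the max and sum are continuous and $f(n_k) \to \alpha$, taking $k \to \infty$ yields exactly the $2$-$\nm{\lambda}$-$\alpha$ inequality for $r^*$.

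The main obstacle is the diagonal extraction: I must ensure a single subsequence $\{n_k\}$ simultaneously produces convergence of $r(T^{(n_k)})$ for every tournament $T$ on every finite number of agents, so that $r^*$ is globally well-defined rather than an ad hoc limit chosen tournament by tournament. This is handled by enumerating the countable collection $\bigsqcup_n \mathcal{T}_n$ and iteratively thinning the subsequence so that the $m$-th tournament in the enumeration has a convergent image, then passing to the diagonal. The role of the hypothesis $\lambda \geq 1$ should not enter the limit argument itself; I expect it to appear only insofar as it is already baked into the $2$-$\nm{\lambda}$-$f(n)$ hypothesis being meaningful (mirroring its appearance in surrounding results), and my plan does not require it beyond invoking $r$'s assumed property verbatim.
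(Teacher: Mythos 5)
Your proposal is correct and follows essentially the same route as the paper: extend each tournament by Condorcet losers, use TCC to keep the induced distributions in the compact simplex $\Delta^n$, extract a convergent subsequence, and pass the $2$-$\nm{\lambda}$-$f(n_k)$ inequalities to the limit. The only (immaterial) difference is that you diagonalize over all tournament sizes at once, whereas the paper fixes the number of agents $m$ and exploits the finiteness of $\mathcal{T}_m$ to extract one convergent subsequence per $m$, which suffices because the adjacency and TCC conditions never compare tournaments of different sizes; you also correctly observe that the hypothesis $\lambda \geq 1$ plays no role in the argument.
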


\begin{proof}
Again, the backward implication is trivial, so we focus on the forward implication.
Suppose a tournament rule $r$ satisfies TCC and 2-$\nm{\lambda}$-$f(n)$.
Fix the number of agents $m$.
We will construct a tournament rule on $m$ agents $s^{(m)}$ that is TCC and 2-$\nm{\lambda}$-$\alpha$.

For each $n \in \NN$, define a tournament rule on $m$ agents $w^{m,n}$ as follows.
For all $i \in [m], T \in \mathcal{T}_m$, define
\[
    w^{m,n}_i(T) := r_i\Paren{T^{(n)}}
\]
Note that $TC(T^{(n)}) = TC(T) \subseteq T$, so $w^{m,n}(T) \in \Delta^m$ and since $r$ is TCC and 2-$\nm{\lambda}$-$f(n)$, $w^{m,n}$ is as well.

Now, since $[m] \times \mathcal{T}_m$ is a finite set and each coordinate of $\left((w^{m,n}(T))_{T \in \mathcal{T}_m}\right)_{n=1}^\infty$ is bounded, there exists a convergent subsequence $\left((w^{m,n_k}(T))_{T \in \mathcal{T}_m}\right)_{k=1}^\infty$.
Define $s^{(m)}(T) = \lim_{k\to\infty} w^{n_k}(T)$ for all $T \in \mathcal{T}_m$.
By construction, $s^{(m)}(T) \in \Delta^m$ for all $T \in \mathcal{T}_m$ and $s^{(m)}$ is TCC.
Moreover, $s^{(m)}$ satisfies 2-$\nm{\lambda}$-$\alpha$: for any distinct agents $i, j \in [m]$ and $\{i, j\}$-adjacent tournaments $T, T' \in \mathcal{T}_m$,
\begin{align*}
    s^{(m)}_i(T') + {} & s^{(m)}_j(T') - s^{(m)}_i(T) - s^{(m)}_j(T) - \lambda \max\{s^{(m)}_i(T) - s^{(m)}_i(T'), s^{(m)}_j(T) - s^{(m)}_j(T')\} \\
        = {} & \lim_{k \to \infty} (w^{m,n}_i(T') + w^{m,n}_j(T') - w^{m,n}_i(T) - w^{m,n}_j(T) \\
        & \hspace{25pt} - \lambda \max\{w^{m,n}_i(T) - w^{m,n}_i(T'), w^{m,n}_j(T) - w^{m,n}_j(T')\}) \\
        \leq {} & \lim_{k\to\infty} f(n_k) = \alpha
\end{align*}
Carrying out this procedure for all $m \in \NN$ yields a TCC and 2-$\nm{\lambda}$-$\alpha$ tournament rule $s := \{s^{(m)}\}_{m=1}^\infty$.

\end{proof}

\begin{theorem}\label{theorem:RSEB-RKotH-no-lambda}
Let $\lambda \geq 0$.
If RSEB satisfies 2-$\nm{\lambda}$-$\alpha$, then $\alpha \geq \Omega(1/n)$.
If RKotH or PR satisfy this property, then $\alpha \geq \Omega(1)$.
That is, RSEB, RKotH, and PR are always pairwise manipulable regardless of $\lambda$.
\end{theorem}

\begin{theorem}\label{thm:ICR-RDM-RVC-TCR-PRSL}
If ICR satisfies 2-$\nm{\lambda}$-$\alpha$ for some $\lambda \geq 0$, then $\lambda \geq (1 - O(\alpha))\Omega(n^2)$.
If RDM satisfies this property, then $\lambda \geq (1 - O(n \alpha))\Omega(n)$.
If RVC, TCR, or PRSL satisfy this property, then $\lambda \geq (1 - O(\alpha))\Omega(n)$.
\end{theorem}

\begin{theorem}\label{thm:rules-f(n)}
If ICR, RVC, TCR, RKotH, PR, or PRSL satisfy 2-$\nm{\lambda}$-${f(n)}$ for some fixed $\lambda \geq 1$ and some non-increasing function $f \geq 0$, then $f \geq \Omega(1)$.
If RDM satisfies this property, then $f \geq \Omega(1/\lambda)$.
If RSEB satisfies this property, then $f \geq \varepsilon(\lambda)$ where $\varepsilon(\lambda)$ is some strictly positive function of $\lambda$.
\end{theorem}

\begin{proof}
The results for all rules except RDM and RSEB follow as direct consequences of Theorems \ref{theorem:RSEB-RKotH-no-lambda} and \ref{thm:ICR-RDM-RVC-TCR-PRSL}.
To see the result for RDM, consider the superman kryptonite tournament $T$ on $2\lambda + 1$ players and the distinct $\{1, n\}$-adjacent tournament $T'$ (in which the superman is now the Condorcet winner).
Using the probabilities computed in Theorem \ref{thm:ICR-RDM-RVC-TCR-PRSL}, $f(2\lambda + 1) \geq 1 - r_1^{RDM}(T) - (\lambda + 1)r_j^{RDM}(T) = \frac{2}{2\lambda + 1} - \frac{\lambda + 1}{\lambda(2\lambda + 1)} = \frac{\lambda - 1}{\lambda(2\lambda + 1)}$.
Since RDM is DSTC, this problematic tournament remains problematic when $n \geq 2\lambda + 1$.

The idea behind the proof of the result for RSEB is to assume by way of contradiction that RSEB satisfies 2-$\nm{\lambda}$-${f(n)}$ for some fixed $\lambda \geq 0$ and some $f(n) \overset{n \to \infty}{\to} 0$ and then show that the tournament rule that results from applying the limiting process in Theorem \ref{theorem:TCC_reduction} to RSEB actually requires that $\lambda$ grow with $n$ in order to be 2-$\nm{\lambda}$, which contradicts Theorem \ref{theorem:TCC_reduction}.
We give the details in Section \ref{subsection:RSEB-f(n)} of the appendix.
\end{proof}

\subsection{Gains From Manipulation Under RSEB Do Not Vanish}\label{subsection:RSEB-f(n)}

In what follows, let $r$ denote the RSEB tournament rule.
At times, it is useful to consider a tournament of size $k > n$ in which $T \in \mathcal{T}_n$ is a dominant sub-tournament, so let $T^{(k)}$ denote some tournament in $\mathcal{T}_k$ in which $i \succ_T j \iff i \succ_{T^{(k)}} j$ for all $i, j \in T$ and $i \succ_{T^{(k)}} j$ for all $i \in T, j \in T^{(k)} \setminus T$.
We prove the following result.

\begin{theorem}
If RSEB satisfies 2-$\nm{\lambda}$-${f(n)}$ for some fixed $\lambda \geq 1$ and some non-increasing function $f \geq 0$, then $f \geq \varepsilon(\lambda)$ where $\varepsilon(\lambda)$ is some strictly positive function of $\lambda$.
\end{theorem}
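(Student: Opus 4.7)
The plan is to argue by contradiction. Assume that for some fixed $\lambda \geq 1$ and non-increasing $f \geq 0$ with $\lim_{n \to \infty} f(n) = 0$, RSEB satisfies $2$-$\nm{\lambda}$-$f(n)$; the case $\lim_n f(n) > 0$ is immediate, with $\varepsilon(\lambda) := \lim_n f(n)$. Since RSEB is TCC and $\lambda \geq 1$, Theorem~\ref{theorem:TCC_reduction} produces a TCC tournament rule $s = \{s^{(m)}\}_{m=1}^\infty$ that is $2$-$\nm{\lambda}$, and inspecting the construction, $s^{(m)}(T)$ is a subsequential limit of $r^{\mathrm{RSEB}}(T^{(n)})$ as $n \to \infty$. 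I will contradict $2$-$\nm{\lambda}$ of $s^{(m)}$ on the $m$-agent superman-kryptonite tournament $T_m$ for large $m$.

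To analyze $s^{(m)}(T_m)$, I would use the continuous limit of RSEB: as $n \to \infty$, the $m$ real agents occupy asymptotically i.i.d.\ uniform positions in a bracket of size tending to infinity, equivalently modeled by i.i.d.\ uniform points in $[0,1]$ together with the dyadic binary-tree structure of the bracket. Let $D$ denote the length of the longest common binary prefix between the positions of superman (agent $1$) and kryptonite (agent $m$); in the limit, $\Pr[D = d] = 2^{-d-1}$ for $d \geq 0$, and conditional on $D$, each of the $m-2$ middle agents falls independently in kryptonite's or superman's half of the LCA with probability $2^{-D-1}$ each. A match-by-match analysis (superman loses only to kryptonite; kryptonite loses to every middle; middle $i$ beats kryptonite and all higher-indexed middles but loses to superman and all lower-indexed middles) then shows that kryptonite wins the bracket iff \emph{every} middle lies in superman's half of the LCA, while superman wins iff \emph{some} middle lies in kryptonite's half.

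Summing the resulting event probabilities yields
\begin{align*}
  s^{(m)}_m(T_m) &= \sum_{d=0}^\infty 2^{-(d+1)(m-1)} = \frac{1}{2^{m-1}-1}, \\
  1 - s^{(m)}_1(T_m) &= \sum_{d=0}^\infty 2^{-d-1}\bigl(1 - 2^{-d-1}\bigr)^{m-2} \geq \frac{1}{4(m-2)},
\end{align*}
where the lower bound follows by restricting to $d \geq \lceil \log_2(m-2) \rceil$, on which $(1 - 2^{-d-1})^{m-2} \geq (1 - 1/(m-2))^{m-2} \geq 1/4$. Now consider the manipulation swapping the superman-kryptonite match; the resulting $T_m'$ has superman as a Condorcet winner, so $s^{(m)}_1(T_m') = 1$ and $s^{(m)}_m(T_m') = 0$ by TCC. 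The joint gain is $\Omega(1/m)$, while the maximum individual drop is $O(2^{-m})$; the $2$-$\nm{\lambda}$ condition thus forces $\lambda \geq \Omega(2^m/m)$, which diverges with $m$, contradicting the fixedness of $\lambda$.

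The main obstacle will be rigorously establishing the limiting probabilities: one must argue that the without-replacement sampling of agent positions in $T_m^{(n)}$, together with RSEB's padding of the tournament size to the next power of two, yields asymptotically i.i.d.\ uniform placements in the dyadic structure as $n \to \infty$. By comparison, the case analysis of bracket outcomes and the elementary asymptotic estimates are straightforward.
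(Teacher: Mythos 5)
Your proposal follows essentially the same route as the paper's proof: a contradiction via the limiting construction of Theorem~\ref{theorem:TCC_reduction} applied to the superman--kryptonite tournament, with the kryptonite's limiting win probability bounded above by $1/(2^{m-1}-1)$ and the superman's limiting loss probability bounded below by $\Omega(1/m)$, forcing $\lambda \geq \Omega(2^m/m)$ and contradicting that $\lambda$ is fixed. Your i.i.d.-uniform dyadic placement model yields exactly the limit formulas the paper obtains by explicit binomial asymptotics and a recurrence (Lemmas~\ref{lemma:rseb-recurrence}--\ref{lemma:limiting-probability-superman-loses}), and the one step you defer --- justifying the interchange of the $n\to\infty$ limit with the infinite sum over LCA depths --- is precisely where the paper spends its truncation argument (Lemmas~\ref{lemma:series_lb} and \ref{lemma:limiting-probability-superman-loses}) and is routine.
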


Suppose by way of contradiction that RSEB is 2-$\nm{\lambda}$-${f(n)}$ for some fixed $\lambda \geq 0$ and some non-increasing function $f \geq 0$ such that $f(n) \to 0$ as $n \to \infty$.
Fix $n = 2^h$.
Since RSEB is TCC and assumed to be 2-$\nm{\lambda}$-${f(n)}$, following the logic in Theorem \ref{theorem:TCC_reduction}, we can find a convergent subsequence $((r_1(T^{(2^{m_j})}), \dots, r_n(T^{(2^{m_j})})))_{j=1}^\infty$ such that the limit point $s^{(n)}$ is 2-$\nm{\lambda}$.

Now, let $T \in \mathcal{T}_n$ be the superman kryptonite tournament on $n = 2^h$ agents and let $T'$ denote the distinct tournament that is $\{1, n\}$-adjacent to $T$.
Since $s^{(n)}$ is 2-$\nm{\lambda}$,
\begin{align*}
    0 \geq 1 - s^{(n)}_1(T) - \lambda s^{(n)}_n(T) 
        &\geq \lim_{j\to\infty} (1 - r_1(T_{2^{m_j}}) - \lambda r_n(T_{2^{m_j}})) \\
        &\geq \frac{1}{3n} - \frac{\lambda}{2^{n-1}-1} \tag{Lemmas \ref{lemma:upper-bound-kryptonite-probability} and \ref{lemma:limiting-probability-superman-loses}}
\end{align*}
Thus,
\[
    \lambda \geq \frac{2^{n-1}-1}{3n}
\]
Since our choice of convergent subsequence was arbitrary, this inequality holds for all limit points $s^{(n)}$.
Carrying out this analysis for all $n = 2^h$, we get that any tournament rule $s$ that arises from carrying out the limiting procedure from the proof of Theorem \ref{theorem:TCC_reduction} on RSEB requires that $\lambda$ grows exponentially in the number of agents in order to be 2-$\nm{\lambda}$, contradicting Theorem \ref{theorem:TCC_reduction}.
Thus, RSEB is not 2$\nm{\lambda}$-${f(n)}$ for a fixed $\lambda \geq 0$ and $f(n) \to 0$ as $n \to \infty$.

\begin{lemma}\label{lemma:rseb-recurrence}
Let $n = 2^h$ and let $T \in \mathcal{T}_n$ be the superman-kryptonite tournament on $n$ agents.
For all $m \geq h + 1$,
\[
    r_n(T_{2^m}) = \frac{2}{\binom{2^m}{2^{m-1}}} \left[\binom{2^m-n}{2^{m-1}-1} + \binom{2^m-n}{2^{m-1}-n} r_n(T_{2^{m-1}}) \right]
\]
where $T^{(2^m)} \in \mathcal{T}_{2^m}$ is a tournament on $2^m$ agents in which $T$ is a dominant subtournament. 
\end{lemma}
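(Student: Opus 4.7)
The plan is to condition on the composition of the kryptonite's half of the top-level bracket and perform a case analysis based on where the superman and each middle agent $j \in \{2,\ldots,n-1\}$ lies. A uniformly random bracket on $2^m$ agents may be generated by first partitioning $[2^m]$ uniformly at random into two halves $A$ and $B$ of size $2^{m-1}$ (under the convention that $n \in A$), then independently drawing a uniformly random sub-bracket on each half; the two half-winners meet in the final. There are $\binom{2^m-1}{2^{m-1}-1} = \frac{1}{2}\binom{2^m}{2^{m-1}}$ equally likely choices of $A$, which accounts for the normalizing factor $2/\binom{2^m}{2^{m-1}}$ in the claim.

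Next I would record the key dominance relations in $T^{(2^m)}$: the superman $1$ beats every agent except the kryptonite $n$; each middle agent $j \in \{2,\ldots,n-1\}$ beats every agent in $\{j+1,\ldots,n\}$ and every dummy in $\{n+1,\ldots,2^m\}$; the kryptonite beats only $\{1\} \cup \{n+1,\ldots,2^m\}$; and every dummy loses to every agent in $[n]$. From these two structural observations follow. \textbf{(i)} For any $S \subseteq [2^m]$ with $1 \notin S$ and $S \cap \{2,\ldots,n-1\} \neq \emptyset$, every bracket on $S$ is won by $j^\ast = \min(S \cap \{2,\ldots,n-1\})$, since $j^\ast$ beats every other element of $S$ (higher-indexed middle agents by the $i < j$ rule, the kryptonite because the only exception in $T$ is $n \succ_T 1$, and the dummies by dominance), and so $j^\ast$ never loses a match she plays. \textbf{(ii)} For any $S \subseteq [2^m]$ with $1 \in S$ and $n \notin S$, every bracket on $S$ is won by the superman, who beats every other element of $S$.

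With these in hand, I would case-split on the composition of $A$ and find that exactly two cases contribute positively. In Case A (superman lies in $B$ and no middle agent lies in $A$), $A = \{n\} \cup D$ for some $D \subseteq \{n+1,\ldots,2^m\}$ with $|D| = 2^{m-1}-1$, giving $\binom{2^m-n}{2^{m-1}-1}$ compositions; the kryptonite beats every agent in $A$ so she wins her half with probability $1$, observation (ii) yields that the superman wins $B$, and the kryptonite beats the superman in the final. In Case B (superman lies in $A$ and no middle agent lies in $B$), every middle agent must lie in $A$, so $A \supseteq [n]$ and the remaining $2^{m-1}-n$ slots of $A$ are filled by dummies, giving $\binom{2^m-n}{2^{m-1}-n}$ compositions; the sub-tournament induced on $A$ is isomorphic to a tournament of the form $T^{(2^{m-1})}$, so the kryptonite wins her half with probability exactly $r_n(T^{(2^{m-1})})$, while $B$ consists entirely of dummies, whom she beats. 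In any remaining composition some middle agent lies in the half opposite to the superman; observation (i) then forces that half's winner to be a middle agent, eliminating the kryptonite either inside her half or in the final, so these contribute zero. Summing $\binom{2^m-n}{2^{m-1}-1} + \binom{2^m-n}{2^{m-1}-n}\, r_n(T^{(2^{m-1})})$ and dividing by $\binom{2^m-1}{2^{m-1}-1}$ gives the claimed recurrence.

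The main subtlety is observation (i); once it is seen that $j^\ast$ wins every head-to-head match within $S$, the ``bracket-structure-independent'' conclusion is immediate. Everything else is routine enumeration of the $\binom{2^m-1}{2^{m-1}-1}$ equally likely compositions of $A$.
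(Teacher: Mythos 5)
Your proposal is correct and follows essentially the same route as the paper's proof: condition on the top-level partition $(A,B)$ with $n \in A$, observe that only the two compositions $A = \{n\} \cup \text{dummies}$ and $[n] \subseteq A$ contribute, and count each. Your observation (i) (the minimal-index middle agent wins any bracket from which the superman is absent) is just a cleaner justification of the step the paper states as ``$n$ will have to face someone she loses to before the finals.''
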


\begin{proof}
For any tournament $S$, consider the following approach for computing $r_i(S)$.
For a given partition of $[n]$ into two equal sets $A$ and $B$ in which $i \in A$, the probability that $i$ wins a bracket in which the players on one side are in $A$ and the players on the other are in $B$ is
\[
    r_i(S|_A) \sum_{k: i \succ_S k} r_k(T|_B) 
\]
where $S|_A$ is the tournament subgraph of $S$ induced by the players in $A$.
In other words, the probability that $i$ wins such a bracket is the probability that she wins her side of the bracket times the probability that someone she can beat the winner of the other side.
Randomizing over all such partitions,
\[
    r_i(S) = \frac{2}{\binom{n}{n/2}} \sum_{A,B} \left(r_i(S|_A) \sum_{k: i \succ_S k} r_k(S|_B) \right)
\]

Now, to prove the recurrence relation, we consider several cases of partitions.
If $n \in A$, while $[n-1] \subseteq B$, then $r_n(T_{2^m}|_A) = r_1(T_{2^m}|_B) = 1$ since $n$ can defeat all the dummy players on her side of the bracket, and $2$ only loses to $n$, who is on the other side of the bracket.
If $[n] \subseteq A$, then $r_n(T_{2^m}|_A) = r_n(T_{2^{m-1}})$ and $\sum_{k: n \succ_T k} r_k(T|_B) = 1$ since $n$ can defeat all the dummy players on the other side.
If $n \in A, 1 \in B$, and there exists $i \in A \cap [n-1]$, then $r_n(T|_A) = 0$ since $n$ will have to face someone she loses to before the finals.
Otherwise, $1, n \in A$ and there exists $i \in B \cap [n-1]$.
In this case, $\sum_{k: n \succ_T k} r_k(T|_B) = 0$ since the winner of the $B$ side of the bracket will be someone $n$ loses to.
Thus,
\begin{align*}
    r_n(T_{2^m}) = \frac{2}{\binom{2^m}{2^{m-1}}} \left(\sum_{n \in A, [n-1] \subseteq B} 1 + \sum_{[n] \in A} r_n(T_{2^{m-1}}) \right)
\end{align*}
The number of partitions up to symmetry such that $n \in A, [n-1] \subseteq B$ is $\binom{2^m-n}{2^{m-1}-1}$ (fix $n \in A$ and choose the remaining $2^{m-1}-1$ players from outside of $[n]$), and the number of partitions such that $[n] \subseteq A$ is $\binom{2^m-n}{2^{m-1}}$ (fix $[n] \subseteq A$ and choose the remaining $2^{m-1}-n$ players from outside of $[n]$).
The recurrence relation now follows.
\end{proof}

\begin{lemma}\label{lemma:upper-bound-kryptonite-probability}
Let $n = 2^h$ and let $T \in \mathcal{T}_n$ be the superman-kryptonite tournament on $n$ agents.
For all $m \geq h + 1$,
\[
    r_n\Paren{T^{(2^m)}} \leq \frac{1}{2^{n-1}-1}
\]
where $T^{(2^m)} \in \mathcal{T}_{2^m}$ is some tournament in which $T$ is a dominant subtournament. 
\end{lemma}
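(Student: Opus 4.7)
The plan is to induct on $m \geq h$, with the recurrence from Lemma \ref{lemma:rseb-recurrence} as the inductive engine. For the base case $m = h$, the tournament $T^{(2^h)} = T$ has no dummy agents; since the kryptonite $n$ beats only agent $1$ in $T$, she wins at most one match of any bracket before being paired against some $i \in \{2,\ldots,n-1\}$ that defeats her. Hence $r_n(T) = 0 \leq 1/(2^{n-1}-1)$, establishing the bound at $m = h$.

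For the inductive step, I would assume the claimed bound at $m - 1$ and substitute into the recurrence. After multiplying out and collecting, the task reduces to proving the purely combinatorial inequality
\[
    2(2^{n-1}-1)\binom{2^m - n}{2^{m-1}-1} + 2\binom{2^m - n}{2^{m-1} - n} \leq \binom{2^m}{2^{m-1}}.
\]
My approach is to set $N = 2^m$, $k = N/2$, and $f_j := \binom{N-n}{k-j}$, and apply Vandermonde's identity $\binom{N}{k} = \sum_{j=0}^{n} \binom{n}{j} f_j$ to the right-hand side, rewriting the target as $\sum_{j=0}^{n} \binom{n}{j} f_j \geq (2^n - 2) f_1 + 2 f_n$.

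The crux is to exploit the symmetry $f_j = f_{n-j}$, which holds because $k = N/2$. After pairing symmetric terms in the Vandermonde sum, canceling $2 f_n = 2 f_0$, and using $\sum_{j=1}^{n-1} \binom{n}{j} = 2^n - 2$, the inequality collapses to the sign-definite form
\[
    2\sum_{j=2}^{n/2 - 1}\binom{n}{j}(f_j - f_1) + \binom{n}{n/2}(f_{n/2} - f_1) \geq 0.
\]
This will then follow from unimodality: the ratio $f_{j+1}/f_j = (N/2 - j)/(N/2 - n + j + 1)$ exceeds $1$ precisely when $j \leq (n-1)/2$, so since $n = 2^h$ is even we obtain $f_1 \leq f_2 \leq \cdots \leq f_{n/2}$, making every term non-negative.

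The main obstacle I expect is this combinatorial identity itself. The bound $1/(2^{n-1}-1)$ is tight enough that weakening the inductive hypothesis would leave residual terms of the wrong sign, so the algebraic bookkeeping that turns $2(2^{n-1}-1) f_1 + 2 f_n$ into a symmetric expression matching the Vandermonde decomposition must be done with care. Once the symmetry $f_j = f_{n-j}$ and the unimodality of $\{f_j\}$ have been identified, the rest of the argument is routine.
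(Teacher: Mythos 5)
Your proof is correct, and the inductive step is genuinely different from the paper's. Both arguments run induction on $m$ with Lemma~\ref{lemma:rseb-recurrence} as the engine (your base case $m=h$ with $r_n(T)=0$ versus the paper's direct computation at $m=h+1$ is an immaterial difference), but the paper \emph{strengthens} the inductive hypothesis to the partial geometric sum $r_n(T_{2^m}) \leq \sum_{k=1}^{m-h} 2^{-k(n-1)}$ and then separately bounds each coefficient in the recurrence, $\frac{2}{\binom{2N}{N}}\binom{2N-n}{N-1}$ and $\frac{2}{\binom{2N}{N}}\binom{2N-n}{N-n}$, by $2^{-(n-1)}$ via termwise product estimates; the final bound falls out by summing the geometric series. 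You instead induct directly on the target bound $\tfrac{1}{2^{n-1}-1}$ and reduce to the single inequality $2(2^{n-1}-1)\binom{2^m-n}{2^{m-1}-1} + 2\binom{2^m-n}{2^{m-1}-n} \leq \binom{2^m}{2^{m-1}}$, which you settle by Vandermonde, the symmetry $f_j=f_{n-j}$ (valid precisely because the bracket splits in half), and unimodality of $j \mapsto \binom{N-n}{N/2-j}$; I verified the reduction $\sum_{j=1}^{n-1}\binom{n}{j}(f_j-f_1)\geq 0$ and the monotonicity $f_1 \leq \cdots \leq f_{n/2}$, and both are sound for $N \geq 2n$. The trade-off: the paper's route yields the sharper intermediate statement (a strict bound with an explicit convergence rate to $\tfrac{1}{2^{n-1}-1}$) at the cost of having to guess the strengthened hypothesis, while your route needs no strengthening and isolates the combinatorial content in one clean, sign-definite identity — though it gives only the limiting bound, not the rate.
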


\begin{proof}
We show by induction that $r_n(T_{2^m}) \leq \sum_{k=1}^{m-h} \frac{1}{2^{k(n-1)}}$.
For our base case, consider $m = h+1$, i.e., $2^m = 2n$.
Note that the kryptonite $n$ cannot win in the superman kryptonite tournament of size $n$, so $r_n(T_n) = 0$.
Thus, 
\begin{align*}
    r_n(T_{2n}) &= \frac{2}{\binom{2n}{n}} \binom{n}{n-1} \\
        &= \frac{n!n!}{(2n-1)!} \\
        &= \frac{(n-1)!}{\prod_{k=1}^{n-1}(2n-k)} \\
        &= \frac{(n-1)!}{2^{n-1}\prod_{k=1}^{n-1}(n-k/2)} \\
        &= \frac{1}{2^{n-1}} \prod_{k=1}^{n-1} \frac{n-k}{n-k/2} \leq \frac{1}{2^{n-1}} 
\end{align*}
as desired.

Now, suppose that $r_n(T_{2^{m-1}}) \leq \sum_{k=1}^{m-h-1} \frac{1}{2^{k(n-1)}}$, and consider $r_n(T_{2^m})$.
For notational convenience, let $N := 2^{m-1}$.
Observe that
\begin{align*}
    \frac{2}{\binom{2N}{N}} \binom{2N-n}{N-1} 
        &= \frac{2N!N!}{(2N)!} \cdot \frac{(2N-n)!}{(N-1)!(N-n+1)!} \\
        &= \frac{\prod_{k=0}^{n-2} (N-k)}{\prod_{k=1}^{n-1} (2N-k)} \\
        &= \frac{1}{2^{n-1}}\prod_{k=1}^{n-1} \frac{N-k+1}{N-k/2} \leq \frac{1}{2^{n-1}}
\end{align*}
The last line comes from the fact that $\frac{N}{N-1/2} \cdot \frac{N-3}{N-2} \leq 1$, and the rest of the terms in the product are $\leq 1$.
Moreover, 
\begin{align*}
    \frac{2}{\binom{2N}{N}} \binom{2N-n}{N-n} 
        &= \frac{2N!N!}{(2N)!} \cdot \frac{(2N-n)!}{(N-n)!N!} \\
        &= \frac{\prod_{k=1}^{n-1}(N-k)}{\prod_{k=1}^{n-1}(2N-k)} \\
        &= \frac{1}{2^{n-1}} \prod_{k=1}^{n-1} \frac{N-k}{N-k/2} \leq \frac{1}{2^{n-1}}
\end{align*}
Thus,
\begin{align*}
    r_n(T_{2^m}) &= \frac{2}{\binom{2^m}{2^{m-1}}} \left[\binom{2^m-n}{2^{m-1}-1} + \binom{2^m-n}{2^{m-1}-n} r_n(T_{2^{m-1}}) \right] \\
    &\leq \frac{1}{2^{n-1}} + \frac{r_n(T_{2^{m-1}})}{2^{n-1}} \\
    &= \frac{1}{2^{n-1}} + \sum_{k=1}^{m-h-1} \frac{1}{2^{(k+1)(n-1)}} \tag{inductive hypothesis} \\
    &= \sum_{k=1}^{m-h} \frac{1}{2^{k(n-1)}}
\end{align*}
as desired.
We conclude by remarking that $\sum_{k=1}^\infty \frac{1}{2^{k(n-1)}} = \frac{1}{2^{n-1}-1}$.
\end{proof}

\begin{lemma}\label{lemma:series_lb}
Let $n = 2^h$.
There exists $\ell$ such that $\sum_{k=1}^\ell (1 - \frac{1}{2^k})^{n-2}\frac{1}{2^k} \geq \frac{1}{3n}$.
\end{lemma}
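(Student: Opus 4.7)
The plan is to choose $\ell$ so that a single term of the sum already exceeds $1/(3n)$; the remaining terms are nonnegative, so the partial sum inherits the bound. Viewing the summand as $g(x) = (1-x)^{n-2}\,x$ with $x = 2^{-k}$, one computes $g'(x) = (1-x)^{n-3}\bigl[1 - (n-1)x\bigr]$, so $g$ is maximized near $x = 1/(n-1)$. Since $n = 2^h$, the value $x = 2^{-h} = 1/n$ sits essentially at the peak, which suggests the natural choice $\ell := h$.

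With this choice, the $k = h$ term equals
\[
    \left(1 - \tfrac{1}{n}\right)^{n-2}\cdot \tfrac{1}{n},
\]
and since every other summand is nonnegative it suffices to show $(1-1/n)^{n-2} \geq 1/3$ for all $n = 2^h$ with $h \geq 1$. I would do this by invoking the classical inequality $(1-1/n)^{n-1} \geq 1/e$, which holds for all integers $n \geq 2$ because the sequence $\bigl((1-1/n)^{n-1}\bigr)_{n \geq 2}$ is decreasing with limit $1/e$ (a one-line calculus exercise via $\tfrac{d}{dn}\bigl[(n-1)\ln(1-1/n)\bigr]$). Then
\[
    \left(1 - \tfrac{1}{n}\right)^{n-2} = \frac{n}{n-1}\left(1 - \tfrac{1}{n}\right)^{n-1} \geq \frac{n}{n-1}\cdot \frac{1}{e} \geq \frac{1}{e} > \frac{1}{3},
\]
so the $k = h$ term is strictly larger than $1/(3n)$, giving $\sum_{k=1}^{h} (1 - 2^{-k})^{n-2}\,2^{-k} \geq 1/(3n)$ as required.

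The only case left is the boundary $n = 2$ (i.e.\ $h = 1$), where $(1-1/n)^{n-2} = 1$ trivially; the $k=1$ term is $1/2 \geq 1/6$, and $\ell = 1$ works. Thus $\ell = h$ uniformly handles every $n = 2^h$.

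The only mild subtlety I anticipate is making the $1/e$ bound watertight at the boundary. The numeric slack $1/e - 1/3 \approx 0.035$ is comfortable, and the monotonicity of $(1 - 1/n)^{n-1}$ toward $1/e$ from above is standard, so this is not really an obstacle; it just needs to be cited cleanly. The real content of the argument is the choice $\ell = h$, matching the scale $2^{-k} \sim 1/n$ at which the summand $(1-2^{-k})^{n-2} 2^{-k}$ concentrates.
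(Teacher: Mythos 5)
Your proof is correct, but it takes a genuinely different and more elementary route than the paper's. The paper lower-bounds the \emph{entire} infinite series $\sum_{k\ge 1}(1-2^{-k})^{n-1}2^{-k}$ by an integral comparison with $f(x)=(1-2^{-x})^{n-1}2^{-x}$, obtaining $\frac{1}{n\ln 2}-\frac{1}{n}=\frac{1-\ln 2}{n\ln 2}>\frac{1}{3n}$, and then invokes convergence of the partial sums to conclude that \emph{some} finite $\ell$ works (without naming it). You instead isolate the single dominant term $k=h$, i.e.\ $x=2^{-k}=1/n$, which your derivative computation correctly identifies as (essentially) the maximizer of $(1-x)^{n-2}x$, and bound it below by $\frac{1}{en}>\frac{1}{3n}$ via the standard monotonicity of $(1-1/n)^{n-1}\searrow 1/e$; the remaining terms are nonnegative, so $\ell=h$ suffices. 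Your argument avoids the integral estimate and the convergence discussion entirely and has the added benefit of producing an explicit $\ell$, at the cost of a slightly weaker implied constant ($1/e\approx 0.368$ per $1/n$ versus the paper's $\approx 0.443$ for the full series); since the downstream use in the RSEB lemma only needs the existence of a finite $\ell$ achieving $\frac{1}{3n}$, either version is fully adequate. The one check worth making explicit is that $\frac{n}{n-1}(1-1/n)^{n-1}\ge 1/e$ indeed covers $n=2$ (it gives exactly $1$ there), so your separate boundary remark is harmless but not actually needed.
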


\begin{proof}
Consider the series $\sum_{k=1}^\infty (1 - \frac{1}{2^k})^{n-1}\frac{1}{2^k}$.
The partial sums in this series are increasing and upper bounded by $\sum_{k=1}^\infty \frac{1}{2^k} = 1$, so the series converges.
Define $f(x) := (1 - \frac{1}{2^x})^{n-1}\frac{1}{2^x}$.
\[
    f'(x) = \frac{\ln(2)(1-1/2^x)^n(n-2^x)}{(2^x-1)^2}
\]
Thus, $f$ is strictly increasing on $[0,h]$ and strictly decreasing on $[h,\infty]$.
Moreover, $f$ achieves a local maximum at $h$.
Therefore,
\begin{align*}
    \Paren{1 - \frac{1}{2^h}}^{n-1}\frac{1}{2^h} + \sum_{k=1}^\infty \Paren{1 - \frac{1}{2^k}}^{n-1}\frac{1}{2^k} 
        &\geq \int_{0}^\infty f(x) \; \mathrm{d}x \\
        &= \left[\frac{(1-1/2^x)^n}{n \ln(2)}\right]_0^\infty \\
        &= \frac{1}{n \ln(2)}
\end{align*}
so
\[
    \sum_{k=1}^\infty \Paren{1 - \frac{1}{2^k}}^{n-1}\frac{1}{2^k}  \geq \frac{1}{n \ln(2)} - \Paren{1 - \frac{1}{n}}^{n-1}\frac{1}{n} \geq \frac{1}{n \ln(2)} - \frac{1}{n}
\]
Since $\frac{1 - \ln(2)}{n\ln(2)} > \frac{1}{3n}$, there exists $\ell$ such that $\sum\limits_{k=1}^\ell (1 - \frac{1}{2^k})^{n-2}\frac{1}{2^k} > \sum\limits_{k=1}^\ell (1 - \frac{1}{2^k})^{n-1}\frac{1}{2^k} \geq \frac{1}{3n}$.
\end{proof}

\begin{lemma}\label{lemma:limiting-probability-superman-loses}
Let $n = 2^h$. 
Let $T \in \mathcal{T}_n$ be the superman-kryptonite tournament on $n$ agents and let $T^{(2^m)} \in \mathcal{T}_{2^m}$ be some tournament in which $T$ is a dominant subtournament.
If $(r_1(T_{2^{m_i}}))_{i=1}^\infty$ is a convergent (sub)sequence, then
\[
    \lim_{i\to\infty} 1 - r_1(T_{2^{m_i}}) \geq \frac{1}{3n}
\]
\end{lemma}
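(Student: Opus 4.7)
The plan is to derive an exact formula for $1-r_1(T^{(2^m)})$, compute pointwise limits of its summands as $m\to\infty$, and apply Lemma \ref{lemma:series_lb}.

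First, I would note that in $T^{(2^m)}$ the superman beats everyone except the kryptonite, while the kryptonite beats only the superman together with all $D := 2^m-n$ dummies. Hence the superman loses the RSEB bracket if and only if she eventually meets the kryptonite, and the kryptonite can reach that meeting only if every one of her earlier opponents is a dummy. Writing $N=2^m$, for each $k\ge 1$ the probability that the superman and kryptonite lie in the same size-$2^k$ sub-bracket but in opposite halves, and that the kryptonite's half contains only dummies besides her, equals
\[
    p_k^{(m)} = \frac{2^{k-1}}{N-1}\cdot\frac{\binom{D}{2^{k-1}-1}}{\binom{N-2}{2^{k-1}-1}}
\]
(where the ratio is $1$ when $k=1$), and $1-r_1(T^{(2^m)}) = \sum_{k=1}^{m} p_k^{(m)}$. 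Only the dummy vs.\ non-dummy counts matter here, so this value does not depend on the particular choice of $T^{(2^m)}$.

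Second, reindexing by $j=m-k$ and setting $M := 2^{k-1} = N/2^{j+1}$, I would rewrite the binomial coefficient ratio by counting where the $n-2$ non-dummy, non-superman, non-kryptonite agents can be placed, namely
\[
    \frac{\binom{D}{M-1}}{\binom{N-2}{M-1}} = \prod_{i=0}^{n-3}\frac{N-M-1-i}{N-2-i}.
\]
This is a product of exactly $n-2$ factors, each tending to $1 - 1/2^{j+1}$ as $m\to\infty$. Combined with $2^{k-1}/(N-1) \to 1/2^{j+1}$, this shows that for every fixed $j\ge 0$,
\[
    p_{m-j}^{(m)} \to q_{j+1} \quad \text{as } m\to\infty, \text{ where } q_k := \frac{1}{2^k}\left(1-\frac{1}{2^k}\right)^{n-2}.
\]

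Third, for any fixed $\ell$ and all $m \ge \ell$, truncating yields $1-r_1(T^{(2^m)}) \ge \sum_{j=0}^{\ell-1} p_{m-j}^{(m)}$, a finite sum of pointwise-convergent sequences, hence tending to $\sum_{k=1}^{\ell} q_k$. Therefore $\liminf_{m\to\infty}(1-r_1(T^{(2^m)})) \ge \sum_{k=1}^{\ell} q_k$ for every $\ell$. Choosing $\ell$ via Lemma \ref{lemma:series_lb} so that $\sum_{k=1}^{\ell} q_k \ge 1/(3n)$, and noting that the limit of any convergent subsequence of a bounded sequence is at least the full sequence's liminf, I conclude $\lim_{i\to\infty}(1-r_1(T_{2^{m_i}})) \ge 1/(3n)$. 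The main obstacle I anticipate is the asymptotic evaluation of $\binom{D}{M-1}/\binom{N-2}{M-1}$ when both $D$ and $M$ grow linearly with $N$; the dual-product identity above sidesteps this by expressing the ratio as a bounded-size ($n-2$ terms) product of factors each with a transparent limit.
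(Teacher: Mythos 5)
Your proposal is correct and follows essentially the same route as the paper: decompose the superman's loss probability by the round in which she meets the kryptonite, verify that each round's probability tends to $(1-2^{-k})^{n-2}2^{-k}$, truncate to a finite sum of nonnegative terms, and invoke Lemma \ref{lemma:series_lb}. Your per-round expression $\frac{2^{k-1}}{N-1}\cdot\binom{D}{2^{k-1}-1}/\binom{N-2}{2^{k-1}-1}$ is algebraically identical to the paper's $\binom{2^m-2^h}{2^k-1}/\binom{2^m-1}{2^k}$, and your liminf/truncation step is just a cleaner packaging of the paper's explicit $\varepsilon$ argument.
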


\begin{proof}
Note that $1 - r_1(T_{2^m})$ is the probability that the superman (in $T$) loses $T_{2^m}$.
Since the superman loses if and only if she encounters the kryptonite in some round, for $m \geq h + 1$,
\[
    1 - r_1(T_{2^m}) = \sum_{k=0}^{m-1} \frac{\binom{2^m-2^h}{2^k-1}}{\binom{2^m-1}{2^k}}
\]
To see why this expression is correct, observe that
\[
    \frac{\binom{2^m-2^h}{2^k-1}}{\binom{2^m-1}{2^k}}
\]
is the probability that the superman encounters the kryptonite in round $k + 1$.
Given the seed position of the superman, there is exactly one subtree of height $k + 1$ that the kryptonite must be seeded into in order to have a chance of facing the superman in round $k + 1$.
Moreover, the kryptonite faces the superman in the desired round if and only if the remaining $2^k-1$ players in this subtree are dummy players.
Thus, the probability that the kryptonite reaches round $k$ is the probability that the $2^k$ players in this subtree consist of the kryptonite and $2^k-1$ dummy players.
Now, re-index to get
\[
    1 - r_1(T_{2^m}) = \sum_{k=1}^{m} \frac{\binom{2^m-2^h}{2^{m-k}-1}}{\binom{2^m-1}{2^{m-k}}}
\]

Now, consider a term in the sum.
\begin{align*}
    \frac{\binom{2^m-2^h}{2^{m-k}-1}}{\binom{2^m-1}{2^{m-k}}}
        &= \frac{(2^m - 2^h)!}{(\frac{2^m}{2^{k}}-1)!(\frac{2^m(2^{k}-1)}{2^{k}} - 2^h + 1)!} \cdot \frac{(\frac{2^m}{2^{k}})!(\frac{2^m(2^{k}-1)}{2^{k}}-1)!}{(2^m-1)!} \\
        &= \prod_{j=1}^{2^h-2} \frac{\frac{2^m(2^{k}-1)}{2^{k}}-j}{2^m-j} \cdot \frac{\frac{2^m}{2^{k}}}{2^m - 2^h + 1} \\
        &= \left(\frac{2^{k}-1}{2^{k}}\right)^{2^h-2} \frac{1}{2^{k}} \prod_{j=1}^{2^h-2} \frac{2^m - \frac{2^{k}j}{(2^{k}-1)}}{2^m-j} \cdot \frac{2^m}{2^m - 2^h + 1} 
\end{align*}
Thus,
\[
    \lim_{m\to\infty} \frac{\binom{2^m-2^h}{2^{m-k}-1}}{\binom{2^m-1}{2^{m-k}}} = \left(\frac{2^{k}-1}{2^{k}}\right)^{n-2} \frac{1}{2^{k}} = \left(1 - \frac{1}{2^{k}}\right)^{n-2} \frac{1}{2^{k}} \tag{$n = 2^h$}
\]

Now, let $\varepsilon > 0$ be given.
By Lemma \ref{lemma:series_lb}, there exists $\ell$ such that 
\[
    \sum_{k=1}^\ell (1 - \frac{1}{2^k})^{n-2}\frac{1}{2^k} \geq \frac{1}{3n}
\]
Fix $\ell$.
Since for each $1 \leq k \leq \ell$,
\[
    \lim_{m\to\infty} \frac{\binom{2^m-2^h}{2^{m-k}-1}}{\binom{2^m-1}{2^{m-k}}} = \left(1 - \frac{1}{2^{k}}\right)^{n-2} \frac{1}{2^{k}}
\]
there exists $M_k \geq \ell$ such that for all $m \geq M_k$, 
\[
    \abs{\frac{\binom{2^m-2^h}{2^{m-k}-1}}{\binom{2^m-1}{2^{m-k}}} - (1 - \frac{1}{2^k})^{n-2}\frac{1}{2^k}} < \frac{\varepsilon}{\ell}
\]
Let $M := \max_{1 \leq k \leq \ell} M_k$, so that for all $m \geq M$,
\begin{align*}
    \abs{\sum_{k=1}^\ell \frac{\binom{2^m-2^h}{2^{m-k}-1}}{\binom{2^m-1}{2^{m-k}}} - \sum_{k=1}^\ell (1 - \frac{1}{2^k})^{n-2}\frac{1}{2^k}} 
        &\leq \sum_{k=1}^\ell \abs{\frac{\binom{2^m-2^h}{2^{m-k}-1}}{\binom{2^m-1}{2^{m-k}}} - (1 - \frac{1}{2^k})^{n-2}\frac{1}{2^k}} \\
        &< \sum_{k=1}^\ell \frac{\varepsilon}{\ell} = \varepsilon
\end{align*}
Thus, 
\[
    \lim_{i\to\infty} 1 - r_1(T_{2^{m_i}}) = \lim_{i\to\infty} \sum_{k=1}^{m_i} \frac{\binom{2^{m_i}-2^h}{2^{{m_i}-k}-1}}{\binom{2^{m_i}-1}{2^{{m_i}-k}}} \geq \lim_{m\to\infty} \sum_{k=1}^\ell \frac{\binom{2^m-2^h}{2^{m-k}-1}}{\binom{2^m-1}{2^{m-k}}} = \sum_{k=1}^\ell (1 - \frac{1}{2^k})^{n-2}\frac{1}{2^k} \geq \frac{1}{3n}
\]
\end{proof}




\end{document}